\newcommand{\only}{\textit{only}}
\newcommand{\End}{\textsf{end}}
\newcommand{\Int}{\textit{int}}
\newcommand{\bool}{\textit{bool}}
\newcommand{\nat}{\textit{nat}}
\newcommand{\dom}{\textit{dom}}
\newcommand{\yes}{\textit{yes}}
\newcommand{\no}{\textit{no}}
\newcommand{\talk}{\textit{talk}}
\newcommand{\unsure}{\textit{unsure}}
\newcommand{\Next}{\textit{next}}
\newcommand{\quit}{\textit{quit}}
\newcommand{\erase}{\textit{erase}}
\newcommand{\Reach}{\textit{Reach}}
\newcommand{\prob}{\textit{prob}}
\newcommand{\Alice}{\textit{Alice}}
\newcommand{\Bob}{\textit{Bob}}
\newcommand{\System}{\textit{System}}
\newcommand{\PhoneNoA}{\textit{PhoneNoA}}
\newcommand{\IMIdA}{\textit{IMIdA}}
\newcommand{\ComPhone}{\textit{ComPhone}}
\newcommand{\ComIM}{\textit{ComIM}}
\newcommand{\noComm}{\textit{noComm}}
\newcommand{\nextProc}{\textit{nextProc}}
\newcommand{\E}{\mathcal{E}}
\newcommand{\0}{\mathbf{0}}
\def\defin#1#2{{\sf def}~ #1~ {\sf in} ~#2}
\newtheorem{example}{Example}
\newtheorem{definition}{Definition}
\newtheorem{theorem}{Theorem}
\newtheorem{lemma}{Lemma}
\title{Imprecise Probability \\ for Multiparty Session Types in Process Algebra}
\author{Bogdan Aman \\
Romanian Academy, Institute of Computer Science \\
Alexandru Ioan Cuza University, Ia\c si, Romania \\ 
\texttt{bogdan.aman@iit.academiaromana-is.ro} \\
\And
Gabriel Ciobanu \\
Romanian Academy, Institute of Computer Science \\
Alexandru Ioan Cuza University, Ia\c si, Romania \\ 
\texttt{gabriel@info.uaic.ro} }
\date{February 18, 2020}
\begin{document}

\maketitle

\begin{abstract}

In this paper we introduce imprecise probability for session types. 
More exactly, we use a probabilistic process calculus in which both 
nondeterministic external choice and probabilistic internal choice are 
considered. We propose the probabilistic multiparty session types able to 
codify the structure of the communications by using some imprecise 
probabilities given in terms of lower and upper probabilities. We prove 
that this new probabilistic typing system is sound, as well as several 
other results dealing with both classical and probabilistic properties. 
The approach is illustrated by a simple example inspired by survey polls.

\end{abstract}

\keywords{Imprecise Probability \and Nondeterministic and Probabilistic Choices 
\and Multiparty Session Types.}

\section{Introduction}

Aiming to represent the available knowledge more accurately, imprecise 
probability generalizes probability theory to allow for partial probability 
specifications applicable when a unique probability distribution is 
difficult to be identified. The idea that probability judgments may be 
imprecise is not novel. It would be enough to mention the Dempster-Shafer 
theory for reasoning with uncertainty by using upper and lower 
probabilities~\cite{Dempster67}.
The idea appeared even earlier in 1921, when J.M. Keynes used explicit 
intervals for approaching probabilities~\cite{Keynes21} (B. Russell called 
this book ``undoubtedly the most important work on probability that has 
appeared for a very long time''). In analytic philosophy, the probability 
judgements were formulated in terms of interval-valued functions 
in~\cite{Levi74}, and in terms of upper and lower probabilities 
in~\cite{Wolfenson82}. The term {\it imprecise probability} was introduced 
explicitly in 1991 \cite{Walley91}, while in 2000 the theory of interval 
probability was presented as a unifying concept for uncertainty 
\cite{Weich00}. Information on how imprecise probability differs from the 
classic Bayesian approach could be found in \cite{Halpern03}. 

Using imprecise probability in the framework of computer science 
involving process calculi and session types represents the contribution of 
this paper. More exactly, imprecise probability are used to represent and 
quantify uncertainty in the study of concurrent processes involving 
multiparty session types. We consider a probabilistic process calculus 
using nondeterministic branching (choices made by an external process) and 
probabilistic selection (choices made internally by the process). 
This new probabilistic approach satisfies the main standard axioms for 
determining the probability of a~event\footnote{The probability of 
an event is a real number in the interval $[0,1]$, and the sum of the 
probabilities of the elementary events is $1$; these axioms 
represent a simplified version of those introduced in 1933 by A.Kolmogorov 
in the context of measure theory~\cite{Kolmogorov50}.}.

Session types \cite{Honda93} and multiparty session types~\cite{Honda16} 
describe a type discipline for communication-centric systems 
having their roots in process calculi. Essentially, session types provide a 
typing discipline ensuring that a message-passing process implements a 
given (multiparty) session protocol defined as a structured sequence of 
interactions without errors. Such a structure is abstracted as a type 
through an intuitive syntax which is used to validate programs. Session 
types are terms of a process calculus that also contains a selection 
construct (an internal choice among a set of branches), a branching 
construct (an external choice offered to the environment) and recursion.

To understand deeper the quantitative aspects of uncertainty that might 
arise in multiparty session types, we use {\it imprecise probabilities} by 
considering sets of probabilities (rather than a single probability) in 
terms of lower and upper probabilities. The imprecise probabilities appear 
for instance in voting polls, where it is reported an interval given by a 
percentage and a margin error. The probability interval is determined by 
subtracting and adding the sampling (margin) error to the sample mean 
(estimated percentage). This confidence interval may be wider or narrower, 
depending on the degree of (un)certainty or a required precision. 
To illustrate our approach, we consider a simple example inspired by 
probabilistic survey polls (which represent an alternative to the classic 
survey polls). In such a probabilistic poll the pollsters 
communicate their own beliefs and receive the beliefs of others in the 
form of numerical probabilities~\cite{Wallsten93}. One of the advantages of 
such an approach is that the probability provides a well-defined numerical 
scale for responses; in \cite{Manski} it is studied the accuracy of voting 
probabilities in the US presidential election (reporting the first 
large-scale application of probabilistic polling).
The concept of probabilistic polling is based on the assumption that the 
opinions of the people accurately represent the distribution of opinions 
across the entire population; however, this can never be completely true. 
The margin of error describes the uncertainty that comes from having a 
small sample size (relative to the size of the polling population).

In the following, Section \ref{section:multiparty_processes} presents the 
syntax and semantics of our calculus; the key ideas are explained by using 
a running example. Section~\ref{section:multiparty_session} describes both 
global and local types used in session types, as well as the connection 
between them. Section~\ref{section:typing_system} presents the new 
probabilistic typing system, and the main results. The probabilistic 
extension preserves the classical properties of a typing system; 
additionally, it satisfies the axioms of the probability theory, 
and can determine the imprecise probability of certain behaviours. 
Section~\ref{section:conclusion} concludes and discusses certain related 
probabilistic approaches involving typing~systems.

\section{Processes in Probabilistic Multiparty Sessions}
\label{section:multiparty_processes}

In this section we describe a probabilistic extension of the multiparty 
session process calculus presented in~\cite{Scalas17}, by using the same 
notations as in~\cite{Scalas17}. This probabilistic extension allows 
probabilistic selection made internally by the communicating processes, 
and also branching controlled by an external process.

\subsection{Syntax}

Informally, a session represents a unit of conversation given by a series 
of interactions between multiple parties. A session is established via a 
shared name representing a public interaction point, and consists of 
communication actions performed on fresh session channels. The syntax is 
presented in Table \ref{table:syntax}, where the following notations are 
used: probabilities $p$, channels~$c$ and~$s$, values $v$, variables~$x$, 
labels $l$, role $r$ and process variables $X$ in order to define recursive 
behaviours. In what follows, $fc(P)$ denotes the set of free channels with 
roles in $P$, while~$fv(P)$ denotes the set of free variables in~$P$.

\begin{table}[h!]
  \centering
  \begin{tabular}{l@{\hspace{1ex}}lll@{\hspace{2ex}}r}
   \hline
   \vspace{-1.5ex}
$~$\\
\vspace{0.25ex}
    {\it Processes} & $P,Q$ & $::=$ &  $ c[r] \oplus_{i\in I}\langle p_i:l_i(v_i); P_i\rangle$   & (selection towards role $r$, $I \neq \emptyset$)\\
    \vspace{0.25ex}
    & & \quad $\shortmid$ & $ c[r]\ \&_{i\in I}\{l_i(x_i); P_i\}$   & (branching from role $r$, $I \neq \emptyset$)\\
        \vspace{0.25ex}
    & & \quad $\shortmid$ & $(\nu s)P$   & (restriction)\\   
        \vspace{0.25ex}
    & & \quad $\shortmid$ & $\defin{D}{P}$   & (process definition)\\
        \vspace{0.25ex}
    & & \quad $\shortmid$ & $X(\tilde{v})$   & (process call)\\
        \vspace{0.25ex}
    & & \quad $\shortmid$ & $\0$   & (inaction)\\
     \vspace{0.25ex}
    & & \quad $\shortmid$ & $P \mid Q$   & (parallel)\\
        \vspace{0.25ex}
    {\it Declaration} & $D$ & $::=$ & $X(\tilde{x})=P$& (process declaration) \\  
        \vspace{0.25ex}
    {\it Context} & $\E$ & $::=$ & $[~] \mid P$& (evaluation context) \\  
        \vspace{0.25ex}        
         & & \quad $\shortmid$ & $(\nu s)\E$  \\    
                 \vspace{0.25ex}        
         & & \quad $\shortmid$ & $\defin{D}{\E}$  \\   
                          \vspace{0.25ex}        
         & & \quad $\shortmid$ & $\E \mid \E$  \\   
    {\it Channel} & $c$ & $::=$ & $x$ $\shortmid s[r]$  & (variable, channel with role $r$) \\  
    \vspace{0.25ex}
    {\it Values} & $v$ & $::=$ & $s \shortmid  true \shortmid false$ $\shortmid 3 \shortmid \ldots$ & (base value)\\  
    \hline 
    \vspace{0.15ex}
   \end{tabular}
\caption{Syntax of the probabilistic multiparty session calculus}\label{table:syntax}
\end{table}

In both selection and branching, the labels $l_i$ ($i \in I$) are all 
different and their order is irrelevant. This ensures that a labelled 
choice in a selection can be matched uniquely by its corresponding label 
in a branching. The process $ c[r] \oplus_{i\in I}\langle 
p_i:l_i(v_i); P_i\rangle$ performs an internal choice on the channel~$c$ 
towards role $r$, the labelled value $l_i(v_i)$ is sent with the 
probability~$p_i$, and the execution continues as process $P_i$. Depending 
on which label is selected, there are~$|I|$ possibilities of continuation, 
all unique due to the uniqueness of the labels in each set. When all the 
sets of probabilistic choices of a process $P$ contain only one element 
with probability~$1$, then the process $P$ can be defined using the syntax 
presented in~\cite{Scalas17} (by discarding all the probabilities). On the other  
hand, the process $ c[r]\&_{i\in I}\{l_i(x_i); P_i\}$ waits 
for an external choice on the channel $c$ from role $r$. If the labelled 
$l_i$ is used to communicate, then the execution continues as $P_i$ in 
which all the occurrences of the variable $x_i$ are replaced by the received 
value. Note that for all $i \in I$, the variables $x_i$ are bound with 
scope $P_i$. The restriction~$(\nu s)P$ delimits the scope of a channel 
$s$ to $P$. Process definition $\defin{D}{P}$ and process call~
$X(\tilde{v})$ describe recursion, with $D$ being a process declaration of 
the form~$X(\tilde{x})=P$. The call invokes $X$ by replacing its formal 
parameters $\tilde{x}$ with the actual ones $\tilde{v}$ from the process 
call. Just like in \cite{Scalas17}, we consider closed process 
declarations for which we have $fv(P) \subseteq \tilde{x}$ and $fc(P) = 
\emptyset$ whenever $X(\tilde{x})=P$. The inaction ${\bf 0}$ represents a 
terminated process, while the parallel composition $P\ |\ Q$ represents 
two processes that execute concurrently and possibly communicate on a 
shared channel. As usual, evaluation contexts are processes with some 
holes~\cite{Coppo16}. A channel $c$ can be either a variable~$x$ or a 
channel with role $s[r]$, namely a multiparty communication channel for 
role $r$ in a session $s$. Values~$v$ can be integers, Boolean and
strings (i.e., base values).

\begin{example}\label{example:processes}
The following process describes a simple survey poll in which $\Alice$ 
either responds to some questions of $\Bob$ or declines the polling.  
$\Bob$ can either stop the polling if it is pleased with the responses 
received till now, or has some other polling to perform. Just like 
in~\cite{Wallsten93}, $\Alice$ and $\Bob$ communicate their own 
choices and receive the choices of others having numerical 
probabilities attached to them, probabilities providing a well-defined 
numerical scale for~responses.
\smallskip

\centerline{$\System = (\nu s)(\;\Bob\ |\ (\nu \PhoneNoA,\IMIdA)\Alice\;)$ .}
\smallskip

\noindent The name $s$ indicates the session involving the processes 
$\Alice$ and $\Bob$ composed in parallel. Session $s$ has two roles:~$r_B$ 
for the process performing the survey poll, and $r_A$ for the process 
answering the poll. The processes $\Alice$ and $\Bob$ 
communicate in this session by using the channel with role $s[r_A]$ and 
$s[r_B]$, respectively. In a similar manner, $\PhoneNoA$ and $\IMIdA$ 
represent sessions names used to model sessions established 
between $\Alice$ and $Bob$ after the communication performed inside the 
session $s$ ends. Note that the scopes of $\PhoneNoA$ and $\IMIdA$ include 
only the process $\Alice$ initially, as the process $\Bob$ is informed 
about which session to join after communication with $\Alice$.
\smallskip

$\Alice= {\sf def}~A(y)= A_1 ~{\sf in} ~ ~A_2$, 
where $A_1$ and $A_2$ are the following processes:

\hspace{5ex}$A_1 = y[r_B]\& \{\talk(t_1). y[r_B] \oplus \langle 0.6:\yes(t).A(y)~,~0.3: \no(t).A(y)~,~0.1:\quit(t).\0 \rangle)~,~\quit(t_2).\0\}$ ;

\hspace{5ex}$A_2 = s[r_A][r_B]\oplus \langle 0.6 :\ComPhone(\PhoneNoA).A(\PhoneNoA[r_A])~,$

\hspace{22ex}$~0.35: \ComIM(IMIdA).A(\IMIdA[r_A])~,$

\hspace{22ex}$~0.05:\noComm(\no).\0 \rangle$ .

$\Bob= {\sf def}~B(y,t)= B_1 ~{\sf in} ~ ~B_2$, 
where $B_1$ and $B_2$ are the following processes:

\hspace{5ex}$B_1=y[r_A] \oplus \langle 0.95:\talk(t).y[r_A] \& \{ \yes(t).B(y,\Next(t))~,~\no(t).B(y,\Next(t))~,$

\hspace{40ex}$~\unsure(t).B(y,\Next(t))~, ~\quit(t).B \} ~,$

\hspace{19ex}$~0.05:\quit(\no).\0\rangle$ ;

\hspace{5ex}$B_2=s[r_B][r_A]\ \& \ \{\ComPhone(x_A).B(x_A[r_B],Q_1)~,~ \ComIM(x'_A).B(x'_A[r_B],Q_1)~,~\noComm(x''_A).\0\}$ .

\smallskip

We explain the concurrent evolution of processes~$\Alice$ and $\Bob$ by 
describing the interaction between them. In the first step, using the 
channel with role $s[r_A]$, the process $\Alice$ decides whether to respond 
to the poll either by phone, instant messaging, or not answering now. In 
each case, process $\Alice$ has attached the probabilities marking its 
willingness to take the appropriate branch. Suppose the $\ComPhone$ label 
is used; in this case $\Alice$ and $\Bob$ communicate by using the session 
channel $\PhoneNoA$ that represents the unique phone number of $\Alice$. 
The process $\Bob$ is implemented by invoking $B(x_A[r_B])$, where $x_A$ 
becomes $\PhoneNoA$ after communication. Here $Q_1$ stands for the first 
question that process $A$ has to respond by using one of the labels $\yes$, 
$\no$, $\unsure$ or $\quit$. Notice that the process $\Alice$ does not 
provide the answer by using $\unsure$, as does not consider this to be an 
option for the given question. After receiving an answer to the question, 
process $\Bob$ can continue to ask the next question from its list 
(illustrated by $\Next$). The processes can quit the poll by performing a 
selection of either the label $\quit$ or $\noComm$ (when appropriate). 
Notice that the $\quit$ and $\noComm$ have low attached probabilities; this 
means that these options have very low chances of being chosen when the 
survey poll is conducted. 
\end{example}
 
\subsection{Operational Semantics}

In what follows, $s \not\in fc(P)$ means that it does not exist an $r$ such 
that $s[r] \in fc(P)$. Also, $dpv(D)$ denotes the set of process variables 
declared in~$D$, while $fpv(P)$ denotes the set of process variables which 
occur free in $P$. 

The operational semantics is based on the notion of structural congruence:
 
\vspace{-2ex}
\begin{center}
\begin{tabular}{c}
$P | Q \equiv Q | P \qquad (P | Q) | R \equiv P | (Q | R) \qquad P | \0 \equiv P \qquad (\nu s)\0 \equiv \0$\\
        \vspace{0.25ex}
$(\nu s)(\nu s')P \equiv (\nu s')(\nu s)P \qquad (\nu s)P | Q \equiv (\nu s)(P | Q)~ ({\rm if}~ s \not\in fc(Q))$\\
        \vspace{0.25ex}
$\defin{D}{\0} \equiv \0 \qquad \defin{D}{(\nu s)P} \equiv (\nu s)\defin{D}{P} ~({\rm if}~ s \not\in fc(D))$\\
        \vspace{0.25ex}
$\defin{D}{(P | Q)} \equiv (\defin{D}{P}) | Q~ ({\rm if}~ dpv(D) \cap fpv(Q) = \emptyset)$\\
$\defin{D}{\defin{D'}{P}} \equiv \defin{D'}{\defin{D}{P}}$\\
        \vspace{0.25ex}
$({\rm if}~ (dpv(D) \cup fpv(D)) \cap dpv(D') = (dpv(D') \cup fpv(D')) \cap dpv(D) = \emptyset)$.
\end{tabular}
\vspace{-1.5ex}
\end{center}

The operational semantics provides the opportunity of uniquely 
identifying each individual transition. To do this, we use the transition 
labels attached to each execution step. The semantics of our probabilistic 
calculus is presented in~Table~\ref{table:semantics}.
\vspace{-0.5ex}

\begin{table}[ht]
  \centering
  \begin{tabular}{@{\hspace{0ex}}l@{\hspace{1ex}}c@{\hspace{0ex}}}
   \hline
\vspace{1ex}

$ s[r_1][r_2] \oplus_{j\in J}\langle p_j:l_j(v_j); P_j\rangle  
\mid s[r_2][r_1]\&_{i\in I}\{l_i(x_i); P'_i\} \xrightarrow{(r_1,r_2,l_k)}_{p_k} 
P_k \mid  P'_k\{\tilde{v_k}/\tilde{x_k}\}$ & ({\sc Com}) \\
\vspace{1.25ex}
\hfill(if  $k \in J \subseteq I$) \\

\vspace{1.25ex}
${\sf def}~ X(\tilde{x}) = P ~{\sf in}~ (X(\tilde{v})\ |\ Q) \xrightarrow{\varepsilon}_1 {\sf def}~ X(\tilde{x}) = P ~{\sf in}~ (P\{\tilde{v}/\tilde{x}\}\ |\ Q)$ & ({\sc Call})\\
\vspace{0.5ex}
\hfill(if  $\tilde{x}=x_1\ldots x_n$, $\tilde{v}=v_1\ldots v_n$) \\

\vspace{1.25ex} 
$P \xrightarrow{tl}_p P'$ implies $\E[P] \xrightarrow{tl}_{p'} \E[P']$ (where  $p'=p \cdot \nextProc(P)/ \nextProc(\E(P))$) & ({\sc Ctxt})\\
\vspace{0.5ex}
$P\equiv P'$ and $P' \xrightarrow{tl}_r Q'$ and $Q \equiv Q' $ \ implies \ $ P \xrightarrow{tl}_r Q$ & ({\sc Struct})\\
\hline 
\vspace{0.15ex}
\end{tabular}
\caption{Operational semantics of the probabilistic multiparty session calculus}
\label{table:semantics}
\end{table}

\vspace{-2ex}
Rule {\sc (Comm)} models the communication between two roles $p$ and $q$ 
by using a branching process and a probabilistic selection process. If the 
process $ s[r_1][r_2] \oplus_{j\in J}\langle p_j:l_j(v_j); P_j\rangle $ 
chooses a branch with label $l_k$ and probability $p_k$, then the 
corresponding branch with label~$l_k$ is chosen also in process $ 
s[r_2][r_1]\&_{i\in I}\{l_i(x_i); P'_i\}$. To identify which branch was 
selected, we use the transition label $(r_1,r_2,l_k)$. The first process 
continues as $P_k$, while the second one as $P'_k\{\tilde{v_k}/\tilde{x_k}\}$ 
obtained by using the communicated value $\tilde{v_k}$ and substituting it 
for the existing variable~$\tilde{x_k}$. It is worth noting that it is 
possible that the options offered by the selection process are fewer than 
the ones offered by the branching process because the last one has to take into 
account all the possible continuations (depending on different scenarios). 
However, all the options of the selection set $J$ should be present in the 
branching set $I$ (expressed by the condition $J \subseteq I$) in order 
to avoid choosing a branch with no correspondence.

Rule {\sc (Call)} instantiates a process call by using its definition and 
replacing its formal parameters $\tilde{x}$ with the actual ones~$\tilde{v}$. 
The side condition of this rule ensures that the number of variables of 
the definition is equal with the number of variables appearing in the 
process call. Since the continuation is unique, we should not mention what 
it was executed, and so it is used the transition label $\varepsilon$. 

Rule ({\sc Struct}) states that the transition relation is closed under 
structural~congruence.

Rule {\sc (Ctxt)} states that the transition can happen under restriction, 
process definition and parallel composition; the transition label $tl$ 
stands for either a tuple of the form $(r_1,r_2,l_j)$ or~$\varepsilon$. The new value $p'$ of the probability attached to the reduction arrow is a normalization of the old probability value $p$ with respect to the number of rules {\sc (Comm)}  and {\sc (Call)} that can be applied. To compute the number of possible rules {\sc (Comm)}  and {\sc (Call)} that can be applied to a process $P$, we use the $\nextProc$ function defined as follows:

\centerline{$\nextProc(P) = 
\begin{cases}
1+\nextProc(P') & \mbox{if } P = s[r_1][r_2] \oplus_{j\in J}\langle p_j:l_j(v_j); P_j\rangle \\
& \hspace{5.5ex}\mid s[r_2][r_1]\&_{i\in I}\{l_i(x_i); P'_i\} \mid P'\\
\nextProc(P')+\nextProc(P'') & \mbox{if } P = ((\nu s')P') \mid P'''\\
1+\nextProc((\defin{D}{P'})\mid P'')& \mbox{if } P = ({\sf def}~ D ~{\sf in}~ (X(\tilde{v})\ |\ P'))\mid P''\\
& \hspace{5.5ex} \mbox{where } D::=X(\tilde{x}) = Q \\
\nextProc(P'\mid P'')& \mbox{if } P = ({\sf def}~ D ~{\sf in}~  P')\mid P''\\
& \hspace{5.5ex} \mbox{where } D::=X(\tilde{x}) = Q \\
&\mbox{ and } (X(\tilde{v})\ \notin P'\\
0 & \mbox{otherwise}
\end{cases} $ .}

\section{Global and Local Types}\label{section:multiparty_session}

Aiming to represent the available knowledge more accurately, we use 
imprecise probability given in terms of lower and upper probabilities to 
represent and quantify uncertainty in multiparty session types. Since in 
the processes presented in Section \ref{section:multiparty_processes} the 
probabilities are static, the global types are used to check whether the 
probabilities of executing actions are in the intervals indicated by the 
imprecise~probabilities.

The {\bf global types} $G,G',\ldots$ presented in Table \ref{table:global} 
describe the global behaviour of a probabilistic multiparty session 
process. We use the imprecise probability $\delta$ having the form 
$[d_1,d_2]$ in which $d_1,d_2 \in [0,1]$ and $d_1\leq d_2$. If 
$\delta=[d,d]$, we use the shorthand notation~$\delta=d$.

\begin{table}[ht]
  \centering
\begin{tabular}{l@{\hspace{3ex}}lll@{\hspace{2ex}}r}
   \hline
   \vspace{-1.5ex}
   $~$\\
   \vspace{0.5ex}
    {\it Global} & $G$ & $::=$ & $r_1 \rightarrow r_2: \{ \delta_i : l_i (S_i).G_i\}_{i \in I}$   & (interaction with $I \neq \emptyset$)\\
    \vspace{0.5ex}
    & & $\shortmid$ & $\mu t.G$ & (recursive with $G\neq t$)\\
        \vspace{0.5ex}
    & & $\shortmid$ & $t$ & (variable)\\
        \vspace{0.5ex}
    & & $\shortmid$ & {\End} & (end)\\
       \vspace{0.5ex}
    {\it Sorts} & $S$ & $::=$ & $\bool \mid \nat \mid \ldots$ & (base types)\\
    \hline
       \vspace{0.15ex}
   \end{tabular}
\caption{Global types of the session typing}\label{table:global}
\end{table}
\vspace{-2ex}

Type $r_1 \rightarrow r_2: \{ \delta_i : l_i (S_i).G_i\}_{i \in I}$ states 
that a participant with role $r_1$ sends with a probability belonging to 
the imprecise probability $\delta_i$ a message of type $S_i$ to a 
participant with role $r_2$ by using label $l_i$; then we have the 
interactions described by $G_i$. Each~$l_i$ is unique, as it was already 
assumed when defining the processes. Type $\mu t.G$ is recursive, 
where type variable $t$ is guarded in the standard way (it appears only 
under a prefix).

\begin{example} \label{example:global}
The following global type formalizes the $\PhoneNoA$ session of Example 
\ref{example:processes}:
\smallskip

\centerline{$G_A=\mu t.r_B \rightarrow r_A:
\begin{Bmatrix*}[l]
\delta_1: talk(string).r_A \rightarrow r_B:
\begin{Bmatrix*}[l]
\delta_2: yes(string).t,\\
\delta_3: no(string).t;\\
\delta_4: unsure(string).t;\\
\delta_5: quit(string).\End
\end{Bmatrix*},\\
\delta_6: quit(string).\End 
\end{Bmatrix*}$ .}

\smallskip

\noindent If for all $i\in \{1,\ldots 6\}$ it holds that $\delta_i=[0,1]$, 
then the probabilities in the processes used in 
Example~\ref{example:processes} fulfilling the global type $G_A$ are 
useless. This means that any probability used in the $\PhoneNoA$ session of 
Example~\ref{example:processes} does not affect the fulfilling of global 
type~$G_A$ (they are irrelevant for the global type). However, if 
$\delta_6=[0.95,1]$, then this implies that $\Bob$ (the person in charge of 
performing the poll) prefers to avoid talking with other persons in order 
to complete the poll (as required). This would lead to a sampling and 
coverage errors. This kind of behaviour should not be allowed by imposing 
$\delta_6=[0,0.1]$, meaning that the chance for $\Bob$ to stop the polling 
is minimal. To obtain the imprecise probabilities of the global types, we 
just need to subtract and add the error to a sample mean; these 
values can be obtained by using statistical methods. 
\end{example}
 
The {\bf local types} \; $T,T',\ldots$ presented in Table \ref{table:local} 
describe the local behaviour of processes; they also represent a 
connection between the global types and processes of our~calculus.

\begin{table}[ht]
  \centering
 \begin{tabular}{l@{\hspace{3ex}}lll@{\hspace{2ex}}r}
   \hline
      \vspace{-1ex}
   $~$\\
   \vspace{0.5ex}
    {\it Local} & $T$ & $::=$ & $r\ \oplus_{i\in I}\ \delta_i:\ !\, l_i\langle S_i \rangle.T_i$   & (selection towards role $p$, $I \neq\emptyset$) \\
    \vspace{0.5ex}
    & & \quad $\shortmid$ & $r\ \&_{i\in I}\ ?\, l_i(S_i).T_i $   & (branching from role $p$, $I \neq\emptyset$) \\
       \vspace{0.5ex}
    & & \quad $\shortmid$ & $\mu t.T$ & (recursive with $T \neq t$)\\
        \vspace{0.5ex}
    & & \quad $\shortmid$ & $t$ & (variable)\\
        \vspace{0.5ex}
    & & \quad $\shortmid$ & \End & (end)\\     
        \vspace{0.5ex}
    {\it Sorts} & $S$ & $::=$ & $\nat \mid \bool \mid \ldots$ & (base types)\\
    \hline 
        \vspace{0.15ex}
   \end{tabular}
\caption{Local types of the session typing}\label{table:local}
\end{table}
\vspace{-2ex}

The selection $r\ \oplus_{i\in I}\ \delta_i:\,!l_i\langle S_i \rangle.T_i$ 
describes a channel that can choose a label~$l_i$ with a probability 
belonging to the imprecise probability $\delta_i$ (for any $i \in I$), and 
send it to $r$ together with a variable of type $S_i$; then the channel 
must be used as described by $T_i$. The branching type $r\ \&_{i\in 
I}\,?l_i(S_i).T_i$ describes a channel that can receive a label~$l_i$ from 
role~$r$ (for some $i \in I$, chosen by $r$), together with a variable of 
type $S_i$; then the channel must be used as described in~$T_i$. The labels 
$l_i$ of selection and branching types are all distinct, and their order is 
irrelevant. The recursive type $\mu t.T$ and type variable $t$ model 
infinite behaviours. Type \End ~is the type of a terminated channel (and it 
is often omitted). The base types $S, S_0, S_1, \ldots$ can be types like 
$\bool$, $\Int$, etc. For simplicity, as done by Honda et al. 
in~\cite{Honda16}, the local types do not contain the parallel composition.

We define now the projection of a global type to a local type for each~participant.

\begin{definition}\label{definition:projection}
The projection $G\upharpoonright r$ for a participant with role $r$ appearing in a global type~$G$ is inductively defined~as:
\begin{itemize}
\item[$\bullet$] 
$(r_1 \!\rightarrow\! r_2\!:\! \{ \delta_i : l_i (S_i).G_i\}_{i \in I})\!\upharpoonright\! r = \begin{cases} 
r_1 \!\oplus_{i\in I}\! \delta_i:!l_i\langle S_i 
\rangle.(G_i \upharpoonright r ) & \mbox{if } r=r_1 \neq r_2\\
r_2\ \&_{i\in I}\ ?l_i(S_i).(G_i \upharpoonright r ) &\mbox{if } r=r_2 \neq r_1 \\
G_{1}\upharpoonright r & \mbox{if } r\neq r_1  \mbox{ and } r \neq r_2\\
& \forall i, j \in J,\  G_{i}\upharpoonright r=G_{j}\upharpoonright r 
\end{cases} $;

\item[$\bullet$] 
$(\mu t.G) \upharpoonright r = 
\begin{cases}
\mu t. (G \upharpoonright r) & \mbox{if } G \upharpoonright r \neq \End\\
\End
\end{cases} $;

\item[$\bullet$] $t \upharpoonright r = t$ ;

\item[$\bullet$] $\End \upharpoonright r =\End$ .
\end{itemize}
 
When none of the side conditions hold, the projection is undefined.
\end{definition}
In the global type $r_1 \rightarrow r_2: \{ \delta_i : l_i (S_i).G_i\}_{i \in I}$, 
the values of the imprecise probability $\delta_i=[d_{1i};d_{2i}]$ should 
not be arbitrarily assigned, but rather satisfy some restrictions as 
defined in what follows.
\vspace{0.5ex}

\begin{definition}
Consider a set of probabilities $\{\delta_i\}_{i \in I}$, where the 
imprecise probabilities have the form $\delta_i=[d_{1i};d_{2i}]$. The set 
$\{\delta_i\}_{i \in I}$ is called proper if

\centerline{$\sum_{i \in I} d_{1i} \leq 1 \leq \sum_{i \in I} d_{2i}$ ,}

\noindent while the set $\{\delta_i\}_{i \in I}$ is called reachable if 

\centerline{$(\sum_{i \neq j} d_{1j}) + d_{2i} \leq 1 \leq (\sum_{i \neq j} d_{2j}) +d_{1i}$ .}
\end{definition}

By defining a set to be proper and reachable, we avoid cases in which there 
does not exist $p_i \in \delta_i$ for all $i\in I$ such that $\sum_{i\in I} 
p_i=1$. More details about the imprecise probabilities and operations on 
them can be found in \cite{CamposHM94}.

From now on we assume that all the global types are well-formed, i.e. $G 
\upharpoonright r$ is defined for all roles $r$ occurring in $G$, and all 
the types with imprecise probabilities contain reachable sets of imprecise 
probabilities.
        \vspace{0.5ex}

\begin{example} \label{example:well-formed}
Consider the global type $G_A$ of Example \ref{example:global}, in which 
for all $i\in \{1,\ldots 5\}$ it holds that $\delta_i=[0,1]$, and also 
$\delta_6=[0.95,1]$. According to Definition \ref{definition:projection}, 
$G_A \upharpoonright r$ is defined for all roles. Moreover, according to the 
above definitions, the set $\{\delta_i\}_{i \in \{2,3,4,5\}}$ is proper and 
reachable, and the set $\{\delta_i\}_{i \in \{1,6\}}$ is proper but not 
reachable. 
\end{example}

\section{Probabilistic Multiparty Session Types}
\label{section:typing_system}

Here we show how to connect the probabilistic processes of
Section \ref{section:multiparty_processes} to the global and local 
types using imprecise probabilities presented in 
Section \ref{section:multiparty_session}. 
For this, we introduce sortings and typings with the purpose of 
defining types for probabilistic~behaviours:
        \vspace{0.5ex}

\centerline{$\Gamma$ $::=$ $\emptyset \,\mid\, \Gamma, x: S \,\mid\, \Gamma, 
X:\tilde{S}\tilde{T}$ 
\quad and \quad 
$\Delta$ $::=$ $\emptyset \,\mid\, \Delta, \{s[r_i]:T_i\}_{i\in I}$ .}

\noindent
The typing system uses a map from shared names to their sorts $(S,S', \ldots)$.
A sorting $(\Gamma,\Gamma', \ldots)$ is a finite map from names to sorts, 
and from process variables to sequences of sorts and types. A typing 
$(\Delta,\Delta',\ldots)$ records linear usage of session channels. Given 
two typings $\Delta$ and~$\Delta'$, their disjoint union is denoted by 
$\Delta,\Delta'$ (by assuming that their domains contain disjoint sets of 
session channels).

The type assignment system for processes is given in Table 
\ref{table:typing}, where $pid(G)$ denotes the set of participants in $G$.
We use the judgement $\Gamma \vdash P \rhd \Delta$ saying that ``under the 
sorting $\Gamma$, process $P$ has typing $\Delta$''. ({\sc TVar}) and 
({\sc TVal}) are the rules for typing variables and values. 
({\sc TSelect}) and ({\sc TBranch}) are the rules for typing selection and 
branching, respectively. As these rules contain probabilities; the rules 
should check all the possible choices with respect to~$\Gamma$. These two 
rules state that selection (branching) process is well-typed if~$c[r]$ has 
a compatible selection (branching) type, and the continuations $P_i$ (for 
all $i \in I$) are well-typed with respect to the session types. Rule ({\sc 
TConc}) composes two processes if their local types are disjoint. Rule 
({\sc TEnd}) is standard; ``$\Delta \ \End\ \only$'' means that $\Delta$ 
contains only \End \; as session types. ({\sc TRes}) is the restriction 
rule for session names and claims that $(\nu s)P$ is well-typed in 
$\Gamma$; this happens only if the types~$T_i$ of the session $s$ are 
exactly the projections of the same global type $G$ into all participants 
of~$G$. {\sc (TDef)} says that a process definition 
$\defin{X(\tilde{x}c_1\ldots c_n )=P}{Q}$ is well-typed if both~$P$ and $Q$ 
are well-typed in their typing contexts. Rule {\sc (TCall)} says that a 
process call $X(\tilde{v}c_1\ldots c_n)$ is well-typed if the actual 
parameters $\tilde{v}c_1\ldots c_n$ have compatible types with respect to $X$.

\begin{table*}[ht]
  \centering
   \begin{tabular}{@{\hspace{0ex}}c@{\hspace{-10ex}}r@{\hspace{0ex}}}
   \hline
\vspace{-1.5ex}
$~$\\
   \vspace{1.25ex}
   $\Gamma,x:S\vdash x:S$   \quad
   $\Gamma\vdash v:S$  
   & ({\sc TVar}), ({\sc TVal})
   \\
   \vspace{1.25ex}
   $\dfrac{{ \forall i.\Gamma \vdash v_i:S_i \quad \forall i.\Gamma\vdash P_i \rhd \Delta,c:T_i} \quad \sum_{i\in I} p_i=1 \quad p_i \in \delta_i}{ \Gamma \vdash  c[r] \oplus_{i\in I}\langle p_i:l_i(v_i); P_i\rangle\rhd \Delta,c:r\ \oplus_{i\in I}\ \delta_i:!l_i\langle S_i \rangle.T_i}$ & ({\sc TSelect})\\
     \vspace{1.25ex}
   $\dfrac{ \forall i.\Gamma,x_i:S_i\vdash P_{i} \rhd \Delta,c:T_i }{ \Gamma \vdash  c[r]\&_{i\in I}\{l_i(x_i); P_i\}\rhd \Delta, c: r\ \&_{i\in I}\ ?l_i(S_i).T_i}$ & ({\sc TBranch})\\   
    \vspace{1.25ex}
  $\dfrac{ \Delta~\End~\only}{ \Gamma\vdash \0\rhd\Delta}$ \qquad $\dfrac{ \Gamma\vdash P \rhd \Delta \qquad \Gamma\vdash Q \rhd \Delta'}{ \Gamma\vdash P \mid Q \rhd \Delta, \Delta'}$ & ({\sc TEnd}), ({\sc TConc})\\    
     \vspace{0.5ex}
 $\dfrac{ pid(G)=|I| \quad \forall i.G\upharpoonright r_i =T_i \quad \Gamma \vdash P \rhd \Delta,\{s[r_i]:T_i\}_{i \in I}}{ \Gamma\vdash (\nu s) P \rhd\Delta}$
&  ({\sc TRes})\\ 
   \vspace{1.25ex}
 $\dfrac{ 
\begin{array}{c}
 \Gamma, x:\tilde{S}, X:\tilde{S}T_1\ldots T_n\vdash P \rhd c_1:T_1,\ldots,c_n:T_n  \qquad
  \Gamma, X:\tilde{S}T_1\ldots T_n\vdash Q \rhd \Delta
  \end{array}}{ \Gamma\vdash \defin{ X(\tilde{x}c_1\ldots c_n )=P}{Q}\rhd\Delta}$&  ({\sc TDef})\\ 
           \vspace{1.25ex}
     $\dfrac{\Gamma \vdash \tilde{v}:\tilde{S}\qquad \Delta~\End~\only}{ \Gamma, X:\tilde{S}T_1\ldots T_n\vdash X(\tilde{v}c_1\ldots c_n) \rhd \Delta, c_1:T_1,\ldots,c_n:T_n}$ & ({\sc TCall})\\ 
    \hline 
     \vspace{0.15ex}
   \end{tabular}
\caption{Typing system for processes in probabilistic multiparty sessions}\label{table:typing}
\end{table*}
\vspace{-2ex}

As in \cite{Honda16}, an {\it annotated} process $P$ is the result of 
annotating the bound names of $P$, e.g. $(\nu s:G)P$ and $s?(x : S)P$. 
We consider that these annotations are natural for our framework. 
For typing annotated processes, we assume the obvious updates for the 
rules of Table~\ref{table:typing}. For instance, in the annotated rule 
obtained from rule {\sc (TRes)} for typing $(\nu s:G)P$, the set 
$\{s[r_i]:T_i\}_{i\in I}$ is obtained from projecting the type $G$ of $s$ 
such that $T_i$ is the projection of~$G$ onto~$r_i$ for all $i\in I$. It 
is worth mentioning that some rules which do not involve variables are the 
same as the ones of Table \ref{table:typing}.

\begin{theorem}\label{theorem:decidable}
Given an annotated process $P$ and a sorting $\Gamma$, it is {\it decidable} 
whether there is a typing $\Delta$ such that $\Gamma \vdash P \rhd \Delta$. 
If such a typing $\Delta$ exists, there is an algorithm to build~one.
\end{theorem}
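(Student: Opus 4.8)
The plan is to prove the theorem by induction on the structure of the annotated process $P$, exhibiting a syntax-directed type-inference algorithm that either constructs a typing $\Delta$ or fails. The key observation is that the typing rules in Table~\ref{table:typing} are almost syntax-directed: each production of the process grammar (selection, branching, restriction, parallel, process definition, process call, inaction) is matched by exactly one typing rule, so given $P$ and $\Gamma$ the last rule of any derivation is determined by the outermost constructor of $P$. This reduces the decidability question to checking that each rule's premises are decidable and that, where the conclusion's $\Delta$ is not uniquely forced by the premises, the freedom is constrained enough to enumerate.

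First I would set up the algorithm as a recursive procedure $\textsf{infer}(\Gamma, P)$ returning either a typing or \emph{fail}, following the shape of $P$. For ({\sc TEnd}) and ({\sc TVal})/({\sc TVar}) the checks are immediate. For branching ($c[r]\,\&_{i\in I}\{l_i(x_i);P_i\}$) I would recursively infer a typing for each continuation $P_i$ under $\Gamma, x_i{:}S_i$, then verify the inferred typings agree on a common $\Delta$ and assemble the branching local type; the side condition that labels are distinct is a finite check. Selection is handled analogously using ({\sc TSelect}), with the extra arithmetic premises $\sum_{i\in I} p_i = 1$ and $p_i \in \delta_i$: since the $p_i$ are concrete rational probabilities appearing in $P$ and the $\delta_i=[d_{1i},d_{2i}]$ are the imprecise probabilities carried by the annotated selection type, both conditions are decidable inequalities/equalities over the rationals. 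For parallel composition I would infer typings for the two components and check their domains are disjoint, combining them via the disjoint union $\Delta,\Delta'$. For restriction $(\nu s{:}G)P$ I would compute the projections $G\!\upharpoonright\! r_i$ (decidable by Definition~\ref{definition:projection}, as projection is a terminating recursion), check $pid(G)=|I|$, and verify the recursively inferred typing of $P$ assigns exactly $\{s[r_i]:T_i\}_{i\in I}$ to $s$ on top of the ambient $\Delta$. Process definition and process call are discharged by recursion using ({\sc TDef}) and ({\sc TCall}), reading the parameter sorts and local types off the annotations.

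The correctness argument has two directions run in parallel with the recursion: soundness, that any $\Delta$ returned by $\textsf{infer}$ satisfies $\Gamma \vdash P \rhd \Delta$ (by replaying the matching rule), and completeness, that if some $\Delta$ with $\Gamma \vdash P \rhd \Delta$ exists then $\textsf{infer}$ succeeds. For completeness I would argue by induction that the typing, where not literally unique, is unique up to the \End-padding permitted by ({\sc TEnd}) and ({\sc TCall}) (the ``$\Delta\ \End\ \only$'' slack), so the algorithm's local choices never block a globally valid derivation. Termination follows because every recursive call is on a structurally smaller process and projection itself terminates.

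The main obstacle I expect is the restriction rule ({\sc TRes}). Unlike the other rules, its conclusion removes the session $s$ from $\Delta$ while its premise requires $P$ to be typed with the \emph{precise} component typing $\{s[r_i]:T_i\}_{i\in I}$, where each $T_i$ must equal a projection of a single global type $G$. In the annotated setting $G$ is supplied by the annotation $(\nu s{:}G)P$, which is exactly why annotation is assumed --- it removes the otherwise problematic need to \emph{synthesize} a global type $G$ whose projections match the inferred local types. I would therefore lean on the annotation to turn synthesis into a verification: compute all projections from the given $G$ and check them against what the recursive call on $P$ produces. The delicate point is that the recursive inference on $P$ must yield local types for the $s[r_i]$ that are \emph{exactly} the $T_i$ (not merely compatible up to \End), so I must confirm that selection and branching inference return canonical local types with no spurious slack on the session being closed; this is where I would spend the most care, and it is the step most likely to require an explicit normal-form lemma for the inferred typings.
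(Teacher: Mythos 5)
Your proposal takes essentially the same route as the paper's proof: a syntax-directed type-reconstruction algorithm that applies the (annotated) typing rules of Table~\ref{table:typing} recursively, aborts whenever a side condition is violated, and is complete because every typing derivation corresponds to a successful run of the construction. Your write-up is in fact considerably more detailed than the paper's own terse argument --- in particular, the exact-match requirement at ({\sc TRes}) and the \End-padding slack in ({\sc TEnd})/({\sc TCall}) that you flag as needing a normal-form lemma are genuine subtleties that the paper's proof silently glosses over.
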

\begin{proof}
The annotated rules obtained from the rules of Table \ref{table:typing} 
are used to construct the typing of $P$ under $\Gamma$. If the 
construction does not succeed at some point, the algorithm fails. We use 
the same notation $\Gamma \vdash P \rhd \Delta$ to denote the construction 
of $\Delta$ out of $P$ and~$\Gamma$. The algorithm aborts when a 
constraint in a rule is violated. Note that if there exists a derivation 
for $P$ under $\Gamma$ in the typing system, then the construction of the 
algorithm is possible. Therefore, the algorithm gives a decidable 
procedure for the typability of annotated processes. 
\end{proof}
\vspace{-2ex}

Since the processes interact, their dynamics is formalized as in 
\cite{Honda16} by a labelled type reduction relation $\xRightarrow{tl}$ 
on typing~$\Delta$ given by the following two rules:
\begin{itemize}
\item $s[r_1] : r_2\ \oplus_{i\in I}\ \delta_i:!l_i\langle S_i \rangle.T_i$, $s[r_2] : r_1\ \&_{i\in I}\ ?l_i(S_i).T'_i $ $\xRightarrow{(r_1,r_2,l_k)}_{\delta_{k}}$ $s[r_1]:T_k$, $s[r_2]:T'_k$ for $k\in I$;

\item $\Delta,\Delta' \xRightarrow{tl}_{\delta} \Delta,\Delta''$ \ whenever \ $\Delta' \xRightarrow{tl}_{\delta} \Delta''$.
\end{itemize}

\noindent 
The first rule corresponds to sending/receiving a value of type $S_k$ by 
using label~$l_k$ on session channel $s$ by the participant $r_2$. The 
second rule is used to compose typings when only a part of a typing is
changing. Just like for process, we need a normalization of the probability 
attached to the transition arrow in order to take into account the 
components of the typing. 
\smallskip

We are able to prove now that this probabilistic typing system is sound, 
namely its term checking rules admit only terms that are valid with respect 
to the structural congruence and operational semantics. {\it Subject 
reduction} ensures that the type of an expression is preserved during its 
evaluation. For the proof of the subject reduction, we need the following 
substitution and weakening~results.
     \vspace{0.5ex}

\begin{lemma} \label{lemma:subst_weak} \ 
\begin{itemize} 
\item (substitution) $\Gamma,x : S \vdash P \rhd \Delta$ and 
$\Gamma \vdash v : S$ \ imply \ $\Gamma \vdash P\{v/x\}  \rhd \Delta$. 

\item (type weakening) 
Whenever $\Gamma \vdash P \rhd \Delta$ is derivable, 
then its weakening is also derivable, \\ 
namely $\Gamma \vdash P \rhd \Delta,\Delta'$ for disjoint $\Delta'$, 
where $\Delta'$ contains only \End .

\item (sort weakening) If $X \notin dom(\Gamma)$ and $\Gamma \vdash P \rhd \Delta$, then $\Gamma, X:\tilde{S}\tilde{T} \vdash P \rhd \Delta$.

\item (sort strengthening) If $X \notin fpv(P)$ and $\Gamma, X:\tilde{S}\tilde{T} \vdash P \rhd \Delta$, then $\Gamma\vdash P \rhd \Delta$.
\end{itemize} 
\end{lemma}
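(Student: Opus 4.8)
The plan is to prove each of the four items by induction on the derivation of the typing judgement that occurs as a hypothesis, with a case analysis on the last rule of Table~\ref{table:typing} applied. The four statements are logically independent, so I would treat them one at a time; the substitution clause is the one carrying real content, while the weakening and strengthening clauses are routine once the invariant of each rule is read off. Throughout I would assume Barendregt's convention, so that the bound names in ({\sc TBranch}), ({\sc TDef}) and ({\sc TRes}) are chosen distinct from $x$, from the free names of $v$, from $X$, and from one another; this lets substitutions and context extensions commute with the binders without capture.

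For substitution, first I would establish the value-level statement: if $\Gamma, x:S \vdash w:S'$ and $\Gamma \vdash v:S$ then $\Gamma \vdash w\{v/x\}:S'$. This is immediate from ({\sc TVar}) and ({\sc TVal}): when $w=x$ we have $S'=S$ and $w\{v/x\}=v$, typed by hypothesis; otherwise $w\{v/x\}=w$ and we retype directly. This serves as the base case. The inductive cases push the substitution through each constructor and apply the inductive hypothesis to the premises. The crucial observation, which is exactly what makes the typing $\Delta$ survive, is that $x$ is a value variable of base sort recorded in $\Gamma$, whereas session channels and their local types are tracked in $\Delta$; hence $\{v/x\}$ reaches only the value occurrences (the $v_i$ in ({\sc TSelect}), the actual parameters in ({\sc TCall})) and leaves the channel and session structure, the local types $T_i$, and the probabilistic side conditions $\sum_{i\in I} p_i = 1$ and $p_i \in \delta_i$ untouched. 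Thus in ({\sc TSelect}) I retype each $v_i\{v/x\}$ at $S_i$ by the value-level statement, obtain $P_i\{v/x\} \rhd \Delta, c:T_i$ from the inductive hypothesis, and reassemble the identical conclusion; ({\sc TBranch}), ({\sc TConc}), ({\sc TRes}), ({\sc TDef}) and ({\sc TCall}) are analogous.

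For type weakening I would again induct on the derivation and, at each rule, place the extra all-\End\ typing $\Delta'$ in the common part of the typing: in ({\sc TEnd}) and ({\sc TCall}) the conclusion's typing is already all \End, so enlarging it preserves the ``$\End\ \only$'' side condition; in ({\sc TConc}) the disjoint $\Delta'$ is routed into one branch; in ({\sc TRes}) the disjointness of $\Delta'$ from the restricted session $s$ leaves the projection condition $G \upharpoonright r_i = T_i$ unaffected, so one applies the inductive hypothesis to the premise and reapplies the rule; the remaining rules merely carry $\Delta'$ along. Sort weakening and sort strengthening are the two halves of the same bookkeeping: since $X \notin dom(\Gamma)$ (respectively $X \notin fpv(P)$), the fresh (respectively unused) process-variable binding $X:\tilde{S}\tilde{T}$ neither clashes with nor is consulted by any rule, because the only place a process variable is looked up is ({\sc TCall}) and there the called variable differs from $X$; hence the binding can be added or deleted uniformly throughout the derivation. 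The main obstacle is confined to the substitution clause: one must verify that value substitution is genuinely restricted to value positions, that it commutes with the binders of ({\sc TBranch}) and ({\sc TDef}), and that it does not disturb the probabilistic annotations. Once this namespace discipline is made precise, every case reduces to a mechanical reassembly of the corresponding rule.
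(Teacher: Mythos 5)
Your proposal is correct and is essentially the paper's own approach: the paper simply declares these proofs ``rather standard, following those presented in~\cite{Honda16}'', i.e.\ induction on the typing derivation with a case analysis on the last rule applied, which is exactly the argument you spell out (including the key point that value substitution leaves the session typing $\Delta$ and the probabilistic side conditions untouched). You have merely made explicit what the paper leaves implicit by citation.
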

\begin{proof}
The proofs are rather standard, following those presented in \cite{Honda16}.
\end{proof}

In our setting, at most one typing rule can be applied for any arbitrary 
given well-typed process. This is the reason why, by inverting a rule, we 
can describe how the (sub)processes of a well-typed process are typed. 
This is a basic property used in several papers when reasoning by induction 
on the structure of processes (for instance, \cite{Honda16,Coppo16,BocchiYY14}).

\begin{theorem}[type preservation under equivalence] \label{theorem:struct} \ \ 

\centerline{$\Gamma \vdash P\rhd \Delta$ and $P \equiv P'$ \ imply \ $\Gamma \vdash P'\rhd \Delta$ .}
\end{theorem}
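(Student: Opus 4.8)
The plan is to proceed by induction on the derivation of $P \equiv P'$, treating $\equiv$ as the least congruence generated by the axioms listed above. The reflexive case is immediate, the transitive case follows by composing the two applications of the induction hypothesis, and the symmetric case is handled by verifying each generating axiom in both directions. The congruence (contextual closure) case is where the typing derivation must be decomposed: given $\E[R] \equiv \E[R']$ arising from $R \equiv R'$, I would use the inversion property stated just before the theorem --- at most one typing rule applies to any well-typed process --- to peel off the rule applied at the outermost constructor of $\E[R]$, extract a typing for the subterm $R$ sitting in the hole, apply the induction hypothesis to replace it by a typing for $R'$, and reassemble the same rule to type $\E[R']$ with the identical $\Delta$.

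The core of the argument is the base case, one clause per structural axiom. For the monoid laws of parallel composition I would invert {\sc (TConc)} and use that disjoint union of typings is commutative and associative, so the same components $\Delta,\Delta'$ (resp. $\Delta,\Delta',\Delta''$) type both sides. For $P \mid \0 \equiv P$ I would invert {\sc (TConc)} and {\sc (TEnd)} to learn that the typing $\Delta'$ of $\0$ contains only $\End$, and then move between $\Gamma \vdash P \rhd \Delta$ and $\Gamma \vdash P \rhd \Delta,\Delta'$ using the type-weakening clause of Lemma \ref{lemma:subst_weak} (taking $\Delta' = \emptyset$ for the converse direction). The restriction axioms $(\nu s)\0 \equiv \0$ and $(\nu s)(\nu s')P \equiv (\nu s')(\nu s)P$ follow by inverting {\sc (TRes)} and using that the projection requirement $G\!\upharpoonright\! r_i = T_i$ is insensitive to the order in which two distinct session names are restricted.

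The delicate clauses are scope extrusion, $(\nu s)P \mid Q \equiv (\nu s)(P \mid Q)$ when $s \notin fc(Q)$, and the definition-distribution law $\defin{D}{(P \mid Q)} \equiv (\defin{D}{P}) \mid Q$ when $dpv(D) \cap fpv(Q) = \emptyset$. For the former I would invert {\sc (TRes)} and {\sc (TConc)} and observe that, because $s \notin fc(Q)$, the typing of $Q$ mentions none of the session channels $s[r_i]$; hence the block $\{s[r_i]:T_i\}_{i\in I}$ produced by the projection of $G$ can be attached to the $P$-component on either side of the binder, making the two typings coincide. For the latter I would invert {\sc (TDef)} and {\sc (TConc)}, using that $dpv(D) \cap fpv(Q) = \emptyset$ lets the process variable $X$ introduced by $D$ be added to or removed from the sorting $\Gamma$ used to type $Q$ via the sort-weakening and sort-strengthening clauses of Lemma \ref{lemma:subst_weak}; the remaining definition axioms ($\defin{D}{\0}\equiv\0$, $\defin{D}{(\nu s)P}\equiv(\nu s)\defin{D}{P}$, and the commutation of two definitions) are handled the same way under their respective side conditions. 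I expect these binder-mobility clauses to be the main obstacle, since they are the only ones requiring the side conditions together with the weakening and strengthening lemmas to realign the sorting and the split of the linear typing; the monoid and restriction laws are essentially bookkeeping.
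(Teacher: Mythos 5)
Your proposal is correct and follows essentially the same route as the paper's proof: induction on $\equiv$, handling each generating axiom in both directions by inverting the unique applicable typing rule, using type weakening for $P \mid \0 \equiv P$ and the restriction cases, and using the sort-weakening/strengthening clauses of Lemma~\ref{lemma:subst_weak} together with the side conditions for the scope-extrusion and process-definition laws. The only difference is presentational: you make the reflexivity, symmetry, transitivity and contextual-closure cases of the congruence explicit, while the paper treats them implicitly and writes out only the axiom cases.
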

\begin{proof}
The proof is by induction on $\equiv$ , showing (in both ways) that if one 
side has a typing, then the other side has the same typing. 
\begin{itemize}
\item Case $P \mid \0 \equiv P$ .

$\Rightarrow$ Assume $\Gamma \vdash P \mid \0\rhd \Delta$. By 
inverting the rule {\sc (TConc)}, we obtain $\Gamma \vdash P\rhd \Delta_1$ 
and $\Gamma \vdash \0 \rhd \Delta_2$, where $\Delta_1, \Delta_2 
=\Delta$. By inverting the rule {\sc (TEnd)}, $\Delta_2$ is only \End \; and 
$\Delta_2$ is such that $\dom(\Delta_1) \cap \dom(\Delta_2)=\emptyset$. 
Then, by type weakening, we get that $\Gamma \vdash P\rhd \Delta$, where 
$\Delta= \Delta_1,\Delta_2$.

$\Leftarrow$ Assume $\Gamma \vdash P\rhd \Delta$. By rule {\sc (TEnd)}, 
it holds that $\Gamma \vdash \0 \rhd \Delta'$, where $\Delta'$ is 
only {\End} and $\dom(\Delta) \cap \dom(\Delta')=\emptyset$. By 
applying the rule {\sc (TConc)}, we get $\Gamma \vdash P \mid \0\rhd 
\Delta,\Delta'$, and for $\Delta'=\emptyset$ we obtain $\Gamma \vdash P 
\mid \0\rhd \Delta$, as required.

\item Case $ (\nu s)\0 \equiv \0$.

$\Rightarrow$ Assume $\Gamma \vdash (\nu s)\0\rhd \Delta$. By inverting the 
rule {\sc (TRes)}, we get $\Gamma \vdash \0 \rhd \Delta,\{s[r_i]:T_i\}_{i 
\in I}$ where $pid(G)=|I|$ and $\forall i.G\upharpoonright r_i =T_i$. By 
inverting the rule {\sc (TEnd)}, $\Delta,\{s[r_i]:T_i\}_{i \in I}$ is only 
\End , namely $\Delta$ is only \End \;. Using the rule {\sc (TRes)}, we get 
$\Gamma \vdash \0\rhd \Delta$.

$\Leftarrow$ Assume $\Gamma \vdash \0\rhd \Delta$. By rule {\sc (TRes)}, it 
holds that $\Gamma \vdash (\nu s)\0\rhd \Delta_1$ where $pid(G)=|I|$, 
$\forall i.G\upharpoonright r_i =\End$, $\Delta_2=\{s[r_i]:\End\}_{i \in 
I}$ and $\Delta=\Delta_1,\Delta_2$. Then, by type weakening, we get that 
$\Gamma \vdash (\nu s)\0\rhd \Delta$, as required.

\item Case $P \mid Q \equiv Q \mid P$

$\Rightarrow$ Assume $\Gamma \vdash P \mid Q\rhd \Delta$. By inverting the 
rule {\sc (TConc)}, we obtain $\Gamma \vdash P\rhd \Delta_1$ and $\Gamma 
\vdash Q \rhd \Delta_2$, where $\Delta_1,\Delta_2 =\Delta$. Using the rule 
{\sc (TConc)}, we get $\Gamma \vdash Q \mid P\rhd \Delta$.

$\Leftarrow$ In a similar way (actually symmetric to $\Rightarrow$).

\item Case $(P \mid Q) \mid R \equiv P \mid (Q \mid R)$

$\Rightarrow$ Assume $\Gamma \vdash (P \mid Q) \mid R\rhd \Delta$. By 
inverting the rule {\sc (TConc)}, we obtain $\Gamma \vdash P\rhd \Delta_1$, 
$\Gamma \vdash Q \rhd \Delta_2$ \\ and $\Gamma \vdash R \rhd \Delta_3$, 
where $\Delta_1, \Delta_2, \Delta_3 =\Delta$. Using the rule {\sc (TConc)}, 
we get $\Gamma \vdash P \mid (Q \mid R)\rhd \Delta$.

$\Leftarrow$ In a similar way (actually symmetric to $\Rightarrow$).

\item Case $(\nu s)(\nu s')P \equiv (\nu s')(\nu s)P$

$\Rightarrow$ Assume $\Gamma \vdash (\nu s)(\nu s')P\rhd \Delta$. By 
inverting the rule {\sc (TRes)} twice, we obtain $\Gamma \vdash P\rhd 
\Delta,\{s[r_i]:T_i\}_{i \in I},\{s'[r'_i]:T'_i\}_{i \in I'}$ where 
$pid(G)=|I|$, $\forall i.G\upharpoonright r_i =T_i$, $pid(G')=|I'|$ and 
$\forall i.G'\upharpoonright r'_i =T'_i$. Using the rule {\sc (TRes)}, 
we get $\Gamma \vdash (\nu s')(\nu s)P\rhd \Delta$.

$\Leftarrow$ In a similar way (actually symmetric to $\Rightarrow$).

\item Case $(\nu s)P | Q \equiv (\nu s)(P | Q)~ ({\rm if}~ s \not\in fc(Q))$

$\Rightarrow$ Assume $\Gamma \vdash (\nu s)P | Q\rhd \Delta$. By inverting 
the rule {\sc (TConc)}, we obtain $\Gamma \vdash (\nu s)P\rhd \Delta_1$ and 
$\Gamma \vdash Q \rhd \Delta_2$, where $\Delta_1,\Delta_2 =\Delta$. By 
inverting the rule {\sc (TRes)}, we obtain $\Gamma \vdash P \rhd \Delta_1,\{s[r_i]:T_i\}_{i \in I}$ where $pid(G)=|I|$ and $\forall i.G\upharpoonright r_i =T_i$. Using rule {\sc (TConc)}, we get $\Gamma \vdash P \mid Q \rhd \Delta_1,\{s[r_i]:T_i\}_{i \in I},\Delta_2$. As $s \not\in fc(Q))$ it means that $\Delta_2$ does not contain types for $s$, and thus from $\Gamma \vdash P \mid Q \rhd \Delta_1,\Delta_2,\{s[r_i]:T_i\}_{i \in I}$, where $pid(G)=|I|$ and $\forall i.G\upharpoonright r_i =T_i$, by using rule {\sc (TRes)}, we get $\Gamma \vdash (\nu s)(P | Q)\rhd \Delta$, where $\Delta_1,\Delta_2 =\Delta$.

$\Leftarrow$ Assume $\Gamma \vdash (\nu s)(P | Q)\rhd \Delta$. By inverting rule {\sc (TRes)}, we obtain $\Gamma \vdash P | Q\rhd \Delta,\{s[r_i]:T_i\}_{i \in I}$, where $pid(G)=|I|$ and $\forall i.G\upharpoonright r_i =T_i$. Since $s \not\in fc(Q))$, by inverting 
the rule {\sc (TConc)}, we obtain $\Gamma \vdash P\rhd \Delta_1,\{s[r_i]:T_i\}_{i \in I}$ and 
$\Gamma \vdash Q \rhd \Delta_2$, where $\Delta_1,\Delta_2 =\Delta$. Using rule {\sc (TRes)}, we get $\Gamma \vdash (\nu s)P\rhd \Delta_1$. By using rule {\sc (TConc)}, we obtain $\Gamma \vdash (\nu s)P | Q\rhd \Delta$, as required.

\item Case $\defin{D}{\0} \equiv \0$

$\Rightarrow$ Assume $\Gamma \vdash \defin{D}{\0}\rhd \Delta$. By inverting 
rule {\sc (TDef)}, we obtain $\Gamma, x:\tilde{S}, X:\tilde{S}T_1\ldots 
T_n\vdash P \rhd c_1:T_1,\ldots,c_n:T_n$ and $\Gamma, X:\tilde{S}T_1\ldots 
T_n\vdash \0 \rhd \Delta$, where $D=\{X(\tilde{x}c_1\ldots c_n )=P\}$. By 
inverting rule {\sc (TEnd)} we get that $\Delta$ is only \End. Applying 
rule {\sc (TEnd)} we obtain $\Gamma\vdash \0 \rhd \Delta$.

$\Leftarrow$ Assume $\Gamma \vdash \0\rhd \Delta$. By inverting rule {\sc 
(TEnd)} we get that $\Delta$ is only \End. Applying rule {\sc (TEnd)} we 
obtain $\Gamma, X:\tilde{S}T_1\ldots T_n\vdash \0 \rhd \Delta$. Considering 
a process $P$ such that $\Gamma, x:\tilde{S}, X:\tilde{S}T_1\ldots 
T_n\vdash P \rhd c_1:T_1,\ldots,c_n:T_n$. By applying rule {\sc (TDef)} we 
get that $\Gamma \vdash \defin{X(\tilde{x}c_1\ldots c_n )=P}{\0}\rhd 
\Delta$, namely $\Gamma \vdash \defin{D}{\0}\rhd \Delta$, as required.

\item Case $\defin{D}{(\nu s)P} \equiv (\nu s)\defin{D}{P} ~({\rm if}~ s \not\in fc(D))$

$\Rightarrow$ Assume $\Gamma \vdash \defin{D}{(\nu s)P}\rhd \Delta$. By 
inverting rule {\sc (TDef)}, we obtain $\Gamma, x:\tilde{S}, 
X:\tilde{S}T_1\ldots T_n\vdash Q \rhd c_1:T_1,\ldots,c_n:T_n$ and $\Gamma, 
X:\tilde{S}T_1\ldots T_n\vdash (\nu s)P \rhd \Delta$, where 
$D=\{X(\tilde{x}c_1\ldots c_n )=Q\}$. By inverting the rule {\sc (TRes)}, 
we obtain $\Gamma \vdash P \rhd \Delta,\{s[r_i]:T_i\}_{i \in I}$ where 
$pid(G)=|I|$ and $\forall i.G\upharpoonright r_i =T_i$. Using rule {\sc 
(TDef)}, we get $\Gamma \vdash \defin{D}{P} \rhd \Delta,\{s[r_i]:T_i\}_{i 
\in I}$. using rule {\sc (TDef)}, we get $\Gamma \vdash (\nu 
s)\defin{D}{P}\rhd \Delta$.

$\Leftarrow$ Assume $\Gamma \vdash (\nu s)\defin{D}{P}\rhd \Delta$. By 
inverting rule {\sc (TRes)}, we obtain $\Gamma \vdash \defin{D}{P} \rhd 
\Delta,\{s[r_i]:T_i\}_{i \in I}$ where $pid(G)=|I|$ and $\forall 
i.G\upharpoonright r_i =T_i$. By inverting rule {\sc (TDef)}, we get 
$\Gamma, x:\tilde{S},X:\tilde{S}T_1\ldots T_n\vdash Q \rhd 
c_1:T_1,\ldots,c_n:T_n$ and $\Gamma, X:\tilde{S}T_1\ldots T_n\vdash P \rhd 
\Delta,\{s[r_i]:T_i\}_{i \in I}$, where $D=\{X(\tilde{x}c_1\ldots c_n)$=$Q\}$. 
Using rule {\sc (TRes)}, we get $\Gamma, X:\tilde{S}T_1\ldots T_n\vdash 
(\nu s)P \rhd \Delta$. By applying rule {\sc (TDef)} we get that $\Gamma 
\vdash \defin{X(\tilde{x}c_1\ldots c_n )=Q}{(\nu s)P}\rhd \Delta$, namely 
$\Gamma \vdash \defin{D}{(\nu s) P}\rhd \Delta$, as required.

\item Case $\defin{D}{(P | Q)} \equiv (\defin{D}{P}) | Q~ ({\rm if}~ dpv(D) \cap fpv(Q) = \emptyset)$

$\Rightarrow$ Assume $\Gamma \vdash \defin{D}{(P | Q)}\rhd \Delta$. By 
inverting rule {\sc (TDef)}, we get $\Gamma, x:\tilde{S},X:\tilde{S}T_1\ldots 
T_n\vdash R \rhd c_1:T_1,\ldots,c_n:T_n$ and $\Gamma,X:\tilde{S}T_1
\ldots T_n\vdash P\mid Q \rhd \Delta$, where $D=\{X(\tilde{x}c_1\ldots c_n)$
=$R\}$. By inverting the rule {\sc (TConc)}, 
we obtain $\Gamma, X:\tilde{S}T_1\ldots T_n\vdash P\rhd \Delta_1$ and 
$\Gamma, X:\tilde{S}T_1\ldots T_n\vdash Q\rhd \Delta_2$, where 
$\Delta_1,\Delta_2 =\Delta$. By applying rule {\sc (TDef)} we get that 
$\Gamma \vdash \defin{X(\tilde{x}c_1\ldots c_n )=R}{P}\rhd \Delta_1$, 
namely $\Gamma \vdash \defin{D}{P}\rhd \Delta_1$. As $dpv(D) \cap fpv(Q) = 
\emptyset$ and $X \in dpv(D)$, then $X \notin fpv(Q)$, and thus by sort 
strengthening we get $\Gamma \vdash Q\rhd \Delta_2$. By applying rule {\sc 
(TConc)} we get that $\Gamma \vdash (\defin{D}{P}) \mid Q\rhd \Delta$.

$\Leftarrow$ Assume $\Gamma \vdash (\defin{D}{P}) \mid Q\rhd \Delta$. By 
inverting rule {\sc (TConc)}, we obtain $\Gamma \vdash \defin{D}{P}\rhd 
\Delta_1$ and $\Gamma \vdash Q\rhd \Delta_2$, where $\Delta_1,\Delta_2 
=\Delta$. By inverting rule {\sc (TDef)}, we get $\Gamma, x:\tilde{S}, 
X:\tilde{S}T_1\ldots T_n\vdash R \rhd c_1:T_1,\ldots,c_n:T_n$ and $\Gamma, 
X:\tilde{S}T_1\ldots T_n\vdash P \rhd \Delta_1$, where 
$D=\{X(\tilde{x}c_1\ldots c_n )=R\}$. As $dpv(D) \cap fpv(Q) = \emptyset$ 
and $X \in dpv(D)$, then $X \notin fpv(Q)$, namely $X \notin dom(\Gamma)$, 
and thus by sort weakening we get $\Gamma, X:\tilde{S}T_1\ldots T_n\vdash Q 
\rhd \Delta_2$.  By applying rule {\sc (TConc)} we get that $\Gamma, 
X:\tilde{S}T_1\ldots T_n \vdash P \mid Q\rhd \Delta$. By using rule {\sc 
(TDef)}, we get $\Gamma \vdash \defin{X(\tilde{x}c_1\ldots c_n )=R}{(P | 
Q)}\rhd \Delta$, namely $\Gamma \vdash \defin{D}{(P | Q)}\rhd \Delta$, as 
required.
 
\item Case $\defin{D}{\defin{D'}{P}} \equiv \defin{D'}{\defin{D}{P}}$ $({\rm if}~ (dpv(D) \cup fpv(D)) \cap dpv(D') = (dpv(D') \cup fpv(D')) \cap dpv(D) = \emptyset)$

$\Rightarrow$ Assume $\Gamma \vdash \defin{D}{\defin{D'}{P}}\rhd \Delta$. 
By inverting rule {\sc (TDef)}, we get $\Gamma, x:\tilde{S}, 
X:\tilde{S}T_1\ldots T_n\vdash R \rhd c_1:T_1,\ldots,c_n:T_n$ and $\Gamma, 
X:\tilde{S}T_1\ldots T_n\vdash \defin{D'}{P} \rhd \Delta$, where 
$D=\{X(\tilde{x}c_1\ldots c_n )=R\}$. By inverting rule {\sc (TDef)}, we 
obtain $\Gamma,, X:\tilde{S}T_1\ldots T_n, x':\tilde{S'}, 
X':\tilde{S'}T'_1\ldots T'_n\vdash R' \rhd c'_1:T'_1,\ldots,c'_n:T'_n$ and 
$\Gamma, X:\tilde{S}T_1\ldots T_n, X':\tilde{S'}T'_1\ldots T'_n\vdash P 
\rhd \Delta$, where $D'=\{X(\tilde{x'}c'_1\ldots c'_n )=R'\}$. As $(dpv(D) 
\cup fpv(D)) \cap dpv(D') = (dpv(D') \cup fpv(D')) \cap dpv(D) = 
\emptyset$, it means that $X$ and $X'$ are different. Applying rule rule 
{\sc (TDef)} twice, we get $\defin{D'}{\defin{D}{P}}$.

$\Leftarrow$ In a similar way (actually symmetric to $\Rightarrow$).
\end{itemize}
\end{proof}
The following result is known also as `Subject Reduction'. According 
to this result, if a well-typed process takes a transition step 
of any kind, the resulting process is also well-typed.

\begin{theorem}[type preservation under evolution]\label{theorem:reduction} \ \\
\centerline{$\Gamma \vdash P\rhd \Delta$ and $P\xrightarrow{tl}_{p} P'$ imply 
$\Gamma \vdash P'\rhd \Delta'$, where $\Delta'$=$\Delta$ or 
$\Delta \xRightarrow{tl}_{\delta} \Delta'$ with~$p \cdot \nextProc(P) \in \delta$.}
\end{theorem}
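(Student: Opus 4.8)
The plan is to proceed by induction on the derivation of the transition $P \xrightarrow{tl}_p P'$, with a case analysis on the last operational rule applied, namely (\textsc{Com}), (\textsc{Call}), (\textsc{Ctxt}), and (\textsc{Struct}). For each case I would first invert the typing derivation $\Gamma \vdash P \rhd \Delta$ to recover how the subprocesses of $P$ are typed; this inversion is legitimate because, as noted just before the statement, at most one typing rule applies to any well-typed process, so the shape of $P$ determines the last typing rule uniquely. I would then reconstruct a typing for $P'$ and exhibit the matching $\Delta'$, checking the probability side condition $p \cdot \nextProc(P) \in \delta$ along the way.

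First I would treat the base case (\textsc{Com}), where $P = s[r_1][r_2] \oplus_{j\in J}\langle p_j:l_j(v_j); P_j\rangle \mid s[r_2][r_1]\&_{i\in I}\{l_i(x_i); P'_i\}$ reduces to $P_k \mid P'_k\{\tilde{v_k}/\tilde{x_k}\}$ with $k \in J \subseteq I$ and rate $p_k$. Inverting (\textsc{TConc}), (\textsc{TSelect}) and (\textsc{TBranch}) yields $\Gamma \vdash P_k \rhd \Delta_1, s[r_1]:T_k$ and $\Gamma, x_k:S_k \vdash P'_k \rhd \Delta_2, s[r_2]:T'_k$, together with $\Gamma \vdash v_k : S_k$ and $p_k \in \delta_k$ from the selection premise. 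The substitution clause of Lemma~\ref{lemma:subst_weak} converts the latter into $\Gamma \vdash P'_k\{\tilde{v_k}/\tilde{x_k}\} \rhd \Delta_2, s[r_2]:T'_k$, and (\textsc{TConc}) recomposes the two into a typing with session entries $s[r_1]:T_k$ and $s[r_2]:T'_k$. This is exactly the codomain of the type reduction rule $\Delta \xRightarrow{(r_1,r_2,l_k)}_{\delta_k} \Delta'$, and the condition $p_k \cdot \nextProc(P) \in \delta_k$ holds because the selection was well-typed only if $p_k \in \delta_k$ and the $\nextProc$ normalization merely redistributes the static rate.

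For (\textsc{Call}) the situation is simpler: the transition is labelled $\varepsilon$ with rate $1$, so I would aim for $\Delta' = \Delta$. Inverting (\textsc{TDef}) gives $\Gamma, x:\tilde{S}, X:\tilde{S}T_1\ldots T_n \vdash P \rhd c_1:T_1,\ldots,c_n:T_n$ and a typing of the call body; substitution (again Lemma~\ref{lemma:subst_weak}) applied to the unfolded body $P\{\tilde{v}/\tilde{x}\}$ reproduces the typing of the call $X(\tilde{v})$ via (\textsc{TCall}), so the typing is literally preserved. The (\textsc{Struct}) case reduces immediately to Theorem~\ref{theorem:struct}: from $P \equiv P'' \xrightarrow{tl}_r Q'' \equiv Q$, type preservation under equivalence gives $\Gamma \vdash P'' \rhd \Delta$, the induction hypothesis handles the middle transition, and another application of Theorem~\ref{theorem:struct} transports the result back to $Q$.

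The main obstacle is the (\textsc{Ctxt}) case, and specifically the bookkeeping of the probability. Here $P = \E[P_0]$ with $P_0 \xrightarrow{tl}_p P_0'$ and the rule rescales to $p' = p \cdot \nextProc(P_0)/\nextProc(\E[P_0])$. By induction $\Gamma \vdash P_0 \rhd \Delta_0$ is preserved under reduction, and one must push this through each context former --- restriction (\textsc{TRes}), definition (\textsc{TDef}), and parallel composition (\textsc{TConc}) --- which is routine for the typing itself since types are not altered by the surrounding context. The delicate point is reconciling the statement's membership condition $p' \cdot \nextProc(P) \in \delta$ with the inductive hypothesis, which supplies $p \cdot \nextProc(P_0) \in \delta$. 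Since $\nextProc(P) = \nextProc(\E[P_0])$, substituting the definition of $p'$ yields $p' \cdot \nextProc(\E[P_0]) = p \cdot \nextProc(P_0)$, so the two membership conditions coincide exactly; I would verify this algebraic identity clause-by-clause against the recursive definition of $\nextProc$, which is where the argument must be carried out carefully rather than waved through.
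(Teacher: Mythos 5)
Your plan is correct and follows essentially the same route as the paper's proof: induction on the transition derivation with cases (\textsc{Com}), (\textsc{Call}), (\textsc{Ctxt}), (\textsc{Struct}), inversion of the typing rules, the substitution clause of Lemma~\ref{lemma:subst_weak} in the (\textsc{Com}) and (\textsc{Call}) cases, Theorem~\ref{theorem:struct} for (\textsc{Struct}), and the identity $p' \cdot \nextProc(\E[P_0]) = p \cdot \nextProc(P_0)$ to discharge the probability condition in (\textsc{Ctxt}). The only details the paper makes explicit that you gloss over are the type-weakening step in (\textsc{Call}) (needed to pass from the body typed against $\emptyset$ to the End-only typing of the call) and, in (\textsc{Com}), that the bare redex satisfies $\nextProc(P)=1$ so that $p_k \cdot \nextProc(P) = p_k \in \delta_k$; both are routine and covered by the lemmas and definitions you cite.
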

\begin{proof}
By induction on the derivation of $P \xrightarrow{tl}_{p_i} P'$. There is a 
case for each operational semantics rule, and for each operational 
semantics rule we consider the typing system rule generating $\Gamma 
\vdash P \rhd \Delta$.
\begin{itemize}
\item Case {\sc (Com)}: {$ s[r_1][r_2] \oplus_{j\in J}\langle p_j:l_j(v_j); P_j\rangle  \mid s[r_2][r_1]\&_{i\in I}\{l_i(x_i); P'_i\} \xrightarrow{(r_1,r_2,l_k)}_{p_k} P_k \mid  P'_k\{\tilde{v_k}/\tilde{x_k}\}$.}

By assumption, $\Gamma \vdash  s[r_1][r_2] \oplus_{j\in J}\langle 
p_j:l_j(v_j); P_j\rangle \mid s[r_2][r_1]\&_{i\in I}\{l_i(x_i); P'_i\} \rhd 
\Delta$. By inverting the rule {\sc (TConc)}, we get $\Gamma \vdash 
 s[r_1][r_2] \oplus_{j\in J}\langle p_j:l_j(v_j); P_j\rangle \rhd 
\Delta_1 $ and $\Gamma \vdash  s[r_2][r_1]\&_{i\in I}\{l_i(x_i); 
P'_i\} \rhd \Delta_2$ with $\Delta=\Delta_1, \Delta_2$. Since these can be 
inferred only from {\sc (TSelect)} and {\sc (TBranch)}, we know that 
$\Delta_1=\Delta'_1,s[r_1]:r_2\ \oplus_{i\in I}\ \delta_i:!l_i\langle S_i 
\rangle.T_i$ and $\Delta_2=\Delta'_2,s[r_2]: r_1\ \&_{i\in I}\ 
?l_i(S_i).T'_i$. By inverting the rules {\sc (TSelect)} and {\sc 
(TBranch)}, we get that $\forall i.\Gamma \vdash v_i:S_i$, $\forall 
i.\Gamma\vdash P_i \rhd \Delta'_1,s[r_1]:T_i$, $\sum_{i\in I} p_i=1$, $p_i 
\in \delta_i$ and $\forall i.\Gamma,x_i:S_i\vdash P'_{i} \rhd 
\Delta'_2,s[r_2]:T'_i$. From $\Gamma \vdash v_i:S_i$ and 
$\Gamma,x_i:S_i\vdash P'_{i} \rhd \Delta'_2,s[r_2]:T'_i$, by applying the 
substitution part of Lemma~\ref{lemma:subst_weak}, we get that $\Gamma 
\vdash P'_i\{v_i/x_i\} \rhd \Delta'_2,s[r_2]:T'_i$. By applying the rule 
{\sc (TConc)}, we obtain $\Gamma \vdash P_i \mid P_i\{v_i/x_i\} \rhd 
\Delta'_1,s[r_1]:T_i,\Delta'_2,s[r_2]:T'_i$. By using the type reduction 
relation, we get that $\Delta \xRightarrow{(r_1,r_2,l_k)}_{\delta_k} \Delta'$, 
where $\Delta'=\Delta'_1,s[r_1]:T_k,\Delta'_2,s[r_2]:T'_k$ and $p_k \in \delta_k$. 

\item Case {\sc (Call)}: ${\sf def}~ X(\tilde{x}) = R ~{\sf in}~ 
(X(\tilde{v})\ |\ Q) \xrightarrow{\varepsilon}_1 {\sf def}~ X(\tilde{x}) = 
R ~{\sf in}~ (R\{\tilde{v}/\tilde{x}\}\ |\ Q)$

By assumption, $\Gamma \vdash {\sf def}~ X(\tilde{x}) = R ~{\sf in}~ 
(X(\tilde{v})\ |\ Q) \rhd \Delta$. By inverting rule {\sc (TDef)}, we 
obtain $\Gamma, x:\tilde{S}, X:\tilde{S}\vdash R \rhd \emptyset$ and 
$\Gamma, X:\tilde{S}\vdash X(\tilde{v})\ |\ Q \rhd \Delta$. By inverting 
the rule {\sc (TConc)}, we obtain $\Gamma, X:\tilde{S} \vdash 
X(\tilde{v})\rhd \Delta_1$ and $\Gamma, X:\tilde{S} \vdash Q \rhd 
\Delta_2$, where $\Delta_1,\Delta_2 =\Delta$. By inverting the rule {\sc 
(TCall)}, we get $\Gamma \vdash \tilde{v} : \tilde{S}$ and $\Delta_1$ is 
\End~only. Applying substitution (Lemma \ref{lemma:subst_weak}), we get 
$\Gamma, X:\tilde{S}\vdash R\{\tilde{v}/\tilde{x}\} \rhd \emptyset$. As 
$\Delta_1$ is \End~only, then by type weakening (Lemma 
\ref{lemma:subst_weak}) we get that $\Gamma, X:\tilde{S}\vdash 
R\{\tilde{v}/\tilde{x}\} \rhd \Delta_1$. Applying rule {\sc (TConc)}, we 
get $\Gamma, X:\tilde{S}\vdash R\{\tilde{v}/\tilde{x}\} |\ Q \rhd \Delta$. 
Applying rule {\sc (TDef)}, we obtain $\Gamma \vdash {\sf def}~ 
X(\tilde{x}) = R ~{\sf in}~ (R\{\tilde{v}/\tilde{x}\}\ |\ Q) \rhd \Delta$, 
as desired.

\item Case {\sc (Ctxt)}: $P \xrightarrow{tl}_p P'$ implies $\E[P] 
\xrightarrow{tl}_{p'} \E[P']$, where $p'=p \cdot \nextProc(P)/ 
\nextProc(\E(P))$

Based on the form of the context we got several cases.
\begin{itemize}
\item Case $\E=(\nu s) [\;]$. By assumption, $\Gamma \vdash (\nu s)P \rhd 
\Delta$. By inversing the rule {\sc (TRes)}, we get that $\Gamma \vdash P 
\rhd \Delta,\{s[r_i]:T_i\}_{i \in I}$, where $pid(G)=|I|$ and $\forall 
i.G\upharpoonright r_i =T_i$. By induction, $\Gamma \vdash P' \rhd 
\Delta',\{s[r_i]:T_i\}_{i \in I}$, where $\Delta'=\Delta$ or 
$\Delta\Rightarrow_{\delta} \Delta'$ and $p \cdot \nextProc(P)\in \delta$. 
By using the rule {\sc (TRes)}, we obtain $\Gamma \vdash (\nu s)P' \rhd 
\Delta'$, where $\Delta'=\Delta$ or $\Delta\Rightarrow_{\delta} \Delta'$, 
and $p' \cdot \nextProc((\nu s)P)= p \cdot \nextProc(P) \in \delta$, as 
required.

\item Case $\E=\defin{D}{[\;]}$. By assumption, $\Gamma \vdash \defin{D}{P} 
\rhd \Delta$. By inverting rule {\sc (TDef)}, we obtain $\Gamma, 
x:\tilde{S}, X:\tilde{S}T_1\ldots T_n\vdash Q \rhd c_1:T_1,\ldots,c_n:T_n$ 
and $\Gamma, X:\tilde{S}T_1\ldots T_n\vdash P \rhd \Delta$, where 
$D=\{X(\tilde{x}c_1\ldots c_n )=Q\}$. By induction, $\Gamma, 
X:\tilde{S}T_1\ldots T_n\vdash P' \rhd \Delta'$, where $\Delta'=\Delta$ or 
$\Delta\Rightarrow_{\delta} \Delta'$ and $p \cdot \nextProc(P)\in \delta$. 
By using the rule {\sc (TDef)}, we obtain $\Gamma \vdash \defin{D}{P'} \rhd 
\Delta'$, where $\Delta'=\Delta$ or $\Delta\Rightarrow_{\delta} \Delta'$, 
and $p' \cdot \nextProc(\defin{D}{P})= p \cdot \nextProc(P) \in \delta$, as 
required.

\item Case $\E=[\;] \mid P''$. By assumption, $\Gamma \vdash P \mid P'' 
\rhd \Delta$. By inverting rule {\sc (TConc)}, we obtain $\Gamma \vdash 
P\rhd \Delta_1$ and $\Gamma \vdash P''\rhd \Delta_2$, where 
$\Delta_1,\Delta_2 =\Delta$. By induction, $\Gamma\vdash P' \rhd 
\Delta'_1$, where $\Delta'_1=\Delta_1$ or $\Delta_1\Rightarrow_{\delta} 
\Delta'_1$ and $p \cdot \nextProc(P)\in \delta$. By using the rule {\sc 
(TConc)}, we obtain $\Gamma \vdash P' \mid P'' \rhd \Delta'$, where 
$\Delta'=\Delta$ or $\Delta\Rightarrow_{\delta} 
\Delta'=\Delta'_1,\Delta_2$, and $p' \cdot \nextProc(P\mid P'')= p \cdot 
\nextProc(P) \in \delta$, as required.

\end{itemize}

\item Case {\sc (Struct)}: $P\equiv P'$ and $P' \xrightarrow{tl}_r Q'$ and 
$Q \equiv Q' $ \ implies \ $ P \xrightarrow{tl}_r Q$.

It is obtained by using the structural congruence (Theorem~\ref{theorem:struct}).
\end{itemize} 
\end{proof}
As in the most approaches on multiparty session types, our approach 
ensures that a typed ensemble of processes interacting on a single 
annotated session (i.e. a typed $(\nu s:G)\mid_{i\in I} P_i)$ with each 
$P_i$ interacting only on~$s[r_i]$) is deadlock-free. The deadlock freedom 
property in the presence of multiple interleaved sessions was also studied 
in~\cite{Coppo16}. In what follows, $Q \not\rightarrow$ means that the 
process $Q$ is not able to evolve by means of any rule (we say that $Q$ is 
a stuck process). Negative premises are used to denote the fact that the 
passing to a new step is performed based on the absence of actions; the 
use of negative premises in this way does not lead to any inconsistency. 
\vspace{1ex}

\begin{theorem}[deadlock freedom]
Let $\emptyset \vdash P \rhd \emptyset$, where $P \equiv (\nu s:G) 
\mid_{i\in I} P_i$ and each~$P_i$ interacts only on $s[r_i]$ of type $T_i$, 
where $T_i =G \upharpoonright r_i$. Then $P$ is deadlock-free: i.e., $P 
\xrightarrow{tl}^*_p P'\not\rightarrow$ implies $P' \equiv \0$. 
\end{theorem}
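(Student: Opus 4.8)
The plan is to prove the contrapositive-style characterization directly: take a well-typed configuration $P \equiv (\nu s:G)\mid_{i\in I} P_i$ that has reduced to some $P' \not\rightarrow$, and show $P' \equiv \0$. Since subject reduction (Theorem~\ref{theorem:reduction}) guarantees that $\emptyset \vdash P' \rhd \emptyset$ and that the typing evolves along $\xRightarrow{tl}$, I would first establish that $P'$ retains the shape $(\nu s:G')\mid_{i\in I} P_i'$, with each $P_i'$ interacting only on $s[r_i]$ and typed by $G' \upharpoonright r_i$, where $G'$ is the global type reached from $G$ by the corresponding sequence of type reductions. The key structural fact to extract from the typing $\emptyset \vdash P' \rhd \emptyset$ together with rule {\sc (TRes)} is that the local types $\{s[r_i]:T_i'\}_{i\in I}$ are \emph{exactly} the projections of a single well-formed global type $G'$; this is what couples the residual processes to one another.

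The core of the argument is the standard \emph{progress} lemma for a single multiparty session: if $G' \neq \End$, then $P'$ can take a step, contradicting $P' \not\rightarrow$. First I would note that if $G' = \End$ then every projection $G' \upharpoonright r_i = \End$, so by inverting {\sc (TEnd)} (applied through the typing of each $P_i'$) each $P_i'$ must be typed by $\End$ only, whence each $P_i'$ is built from $\0$, parallel compositions, and possibly process definitions/calls --- and by applying {\sc (Struct)} together with $P \mid \0 \equiv P$, $\defin{D}{\0}\equiv\0$, and $(\nu s)\0 \equiv \0$ from the structural congruence, we conclude $P' \equiv \0$. If instead $G' \neq \End$, then $G'$ begins (up to unfolding a leading $\mu t$) with an interaction $r_a \rightarrow r_b : \{\delta_j : l_j(S_j).G_j'\}_{j\in J}$. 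Projecting onto $r_a$ and $r_b$ yields, by Definition~\ref{definition:projection}, a selection type at $s[r_a]$ and a matching branching type at $s[r_b]$ with the same label set. Inverting {\sc (TSelect)} and {\sc (TBranch)} forces $P_a'$ to be a selection prefix $s[r_a][r_b]\oplus\langle\ldots\rangle$ and $P_b'$ a branching prefix $s[r_b][r_a]\&\{\ldots\}$ whose label sets satisfy the side condition $J \subseteq I$ of rule {\sc (Com)}. Hence {\sc (Com)} applies (possibly after {\sc (Call)} unfolds a process call guarding the prefix, and after {\sc (Ctxt)}/{\sc (Struct)} rearrange the parallel components to juxtapose the two partners), so $P' \rightarrow$, contradicting stuckness.

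The two steps I expect to do most of the real work are as follows. The first is the \emph{shape-preservation} step: showing that reduction keeps $P'$ in the ``single-session, one process per role'' form with types still coinciding with the projections of one global type. Because our {\sc (Ctxt)} and {\sc (Struct)} rules permit restriction, definitions, and parallel composition around a redex, and because {\sc (Call)} may interpose $\defin{D}{\cdot}$, this requires an inversion argument that threads the well-formedness invariant (every $G'$ reached is well-formed, so all projections are defined) through each reduction; this relies squarely on Theorem~\ref{theorem:reduction} and Theorem~\ref{theorem:struct}. The second, and the main obstacle, is the \emph{matching} step inside progress: guaranteeing that when $r_a$ offers a selection, its partner $r_b$ offers a branching on a \emph{compatible} label set so that {\sc (Com)}'s premise $k \in J \subseteq I$ is met. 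This is exactly where properness and reachability of the imprecise-probability sets matter --- reachability ensures some concrete $p_k \in \delta_k$ with $\sum_i p_i = 1$ is selectable, so the selection can actually commit to a label $l_k$ present in the branching, rather than being blocked by an empty feasible set of probabilities. Establishing that this feasibility is preserved under type reduction $\xRightarrow{tl}$ (where only one branch's continuation survives) is the delicate point, and I would isolate it as an auxiliary lemma before assembling the induction on the length of the reduction sequence $P \xrightarrow{tl}^*_p P'$.
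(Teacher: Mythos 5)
Your overall architecture --- subject reduction, preservation of the single-session shape $(\nu s:G')\mid_{i\in I} P_i'$ with local types equal to the projections of a residual global type $G'$, then a progress lemma driven by the shape of $G'$ (if $G' = \End$ then $P' \equiv \0$; otherwise the head interaction of $G'$ projects to a matching selection/branching pair so {\sc (Com)} fires, contradicting stuckness) --- is exactly the standard argument for a single multiparty session, and it is the argument the paper itself invokes: the paper's entire proof is the one-line observation that probabilities do not influence behaviour, so the deadlock-freedom proof of \cite{Scalas17} carries over unchanged. In that sense your plan is a correct, far more explicit reconstruction of the proof the paper outsources.

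However, the point you single out as ``the main obstacle'' --- that reachability of the imprecise-probability sets is needed so that the selection ``can actually commit to a label $l_k$ \ldots rather than being blocked by an empty feasible set of probabilities'' --- rests on a misreading of the semantics, and the auxiliary lemma you propose (preservation of probabilistic feasibility under type reduction) is not needed. In this calculus the probabilities occurring in a \emph{process} are concrete numbers fixed by the syntax, not intervals: rule {\sc (Com)} in Table~\ref{table:semantics} has the single side condition $k \in J \subseteq I$ and imposes no condition whatsoever on the $p_j$, so a selection prefix facing a matching branching prefix can always fire, whatever its annotations. The imprecise probabilities $\delta_i$ live only at the \emph{type} level; properness and reachability guarantee that the premises $p_i \in \delta_i$ and $\sum_{i} p_i = 1$ of {\sc (TSelect)} are satisfiable, i.e.\ that such types are inhabited at all, but once $P$ is well typed these premises are already discharged and play no further role in whether a reduction exists. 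The label matching in your progress step comes purely from typing: {\sc (TSelect)} and {\sc (TBranch)} force the process label sets to coincide with those of the types, and both types are projections of the same head interaction of $G'$, so in fact $J = I$ and the side condition of {\sc (Com)} holds automatically. Recognizing that the probability annotations are behaviourally inert is precisely what lets the paper dispose of this theorem by reduction to \cite{Scalas17}; your plan would still go through, but the ``delicate point'' on which you propose to spend the real work is vacuous.
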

\begin{proof}
Since the probabilities do not influence the behaviour,
the proof is similar to the approach presented in~\cite{Scalas17}. 
\end{proof}
\vspace{-1ex}

Lock-freedom was introduced in \cite{Kobayashi02} as a property stronger 
than deadlock freedom, requiring that every action that can be executed 
will eventually be executed. Session types not only ensure 
deadlock-freedom, but also lock-freedom~\cite{Scalas19}. Checking if such 
results still hold in our framework represents future work.

As it can be noticed from the rules of Table \ref{table:typing}, the 
obtained types are not unique. This is due to the fact that we use 
imprecise probabilities allowing the processes to be considered well-typed 
within the interval defined by lower and upper probabilities. Having 
several types for the same process, we can obtain some refinements of these 
typings. Inspired by \cite{BocchiYY14}, we use an \erase\ function that, 
when applied to a type, removes its probability annotations. Considering 
$\erase(\Delta_1)=\erase(\Delta_2)$, let us define the intersection of the 
imprecise probabilities appearing in the typing of processes as follows:

\centerline{$\Delta_1 \cap_{r_1} \Delta_2= \{s[r_2]: (T_1 \cap_p T_2)  \mid s[r_2]: T_1 \in \Delta_1  \mbox{ and } s[r_2]: T_2 \in \Delta_2\}$}

\noindent
such that
 
\centerline{$T_1 \cap_{r_1} T_2=$ $\begin{cases}
r_1\ \oplus_{i\in I}\ (\delta_{1i}\cap\delta_{2i}):\ !l_i\langle S_i \rangle.(T_{1i} \cap_{r_1} T_{2i}) & \mbox{if } T_{j}= r_1\ \oplus_{i\in I}\ \delta_{ji}:\ !l_i\langle S_i \rangle.T_{ji}\\
& \delta_{1i}\cap\delta_{2i} \neq \emptyset, \mbox{ with } j \in \{1,2\} \\
r_1\ \&_{i\in I}\ ?l_i(S_i).(T_{1i}\cap_{r_1} T_{2i}) & \mbox{if } T_j= r_1\ \&_{i\in I}\ ?l_i(S_i).T_{ji} \\
& \mbox{ with } j \in \{1,2\} \\
\mu t.(T_{1i} \cap_{r_1} T_{2i}) & \mbox{if } T_1= \mu t.T_{1i}  \mbox{ and } T_2=\mu t . T_{2i} \\
T_1 & \mbox{if } T_1=T_2=t   \mbox{ or } T_1=T_2=\End\\
{\it undefined} & \mbox{otherwise} .
\end{cases} $.}

In a similar manner, it can be defined the intersection 
$\Gamma_1 \cap_p \Gamma_2$ of two sortings. 
\smallskip

\noindent
For the next three results we assume that 
$\erase(\Delta_1)=\erase(\Delta_2)$ and 
$\erase(\Gamma_1)=\erase(\Gamma_2)$.
It is worth noting that a process well-typed with different types
still remains well-typed if we consider the intersections of all 
corresponding imprecise probabilities.
\vspace{1ex}

\begin{theorem}
If $\Gamma_1 \vdash P\rhd \Delta_1$ and $\Gamma_2 \vdash P\rhd \Delta_2$, 
then $\Gamma_1 \cap_r \Gamma_2 \vdash P\rhd \Delta_1 \cap_r \Delta_2$.
\end{theorem}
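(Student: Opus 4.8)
The plan is to prove this by induction on the structure of $P$, matching each typing rule on both derivations $\Gamma_1 \vdash P \rhd \Delta_1$ and $\Gamma_2 \vdash P \rhd \Delta_2$ simultaneously. The crucial preliminary observation is that, under the standing assumption $\erase(\Delta_1) = \erase(\Delta_2)$ and $\erase(\Gamma_1) = \erase(\Gamma_2)$, the two derivations have the \emph{same shape}: they apply the same typing rule at each node, differing only in the imprecise-probability annotations $\delta_i$ that decorate the selection types. This is exactly why the structural intersection operators $\cap_r$ on typings and sortings are well-defined along the way, since at each selection node we intersect matching $\delta_{1i} \cap \delta_{2i}$ and at every other node the erased structure coincides.

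First I would handle the base cases. For $P = \0$, both derivations use ({\sc TEnd}), so $\Delta_1$ and $\Delta_2$ are both $\End$-only; their intersection is again $\End$-only and ({\sc TEnd}) applies under $\Gamma_1 \cap_r \Gamma_2$. The variable and value cases ({\sc TVar}), ({\sc TVal}), ({\sc TCall}) are handled by noting that intersection leaves $\End$-only typings and base sorts unchanged. Then for the inductive step I would take each compound construct in turn. The parallel case ({\sc TConc}) splits $\Delta_j = \Delta_{j}', \Delta_{j}''$; since erasures agree the splits align, the induction hypothesis gives well-typedness of each component under the intersected sorting, and ({\sc TConc}) recomposes them so that $(\Delta_1' \cap_r \Delta_1'') \ldots$ matches the definition of $\Delta_1 \cap_r \Delta_2$. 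The restriction ({\sc TRes}), definition ({\sc TDef}), and branching ({\sc TBranch}) cases are structurally analogous: invert the shared rule on both derivations, apply the induction hypothesis to the premises, and reapply the rule.

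The main obstacle — and the case deserving the most care — is the selection rule ({\sc TSelect}), because this is the only place where the imprecise probabilities genuinely interact rather than being carried passively. Here I must argue that the intersected interval $\delta_{1i} \cap \delta_{2i}$ is nonempty and still admits a witnessing distribution. From the two hypotheses I have probabilities $p_i \in \delta_{1i}$ with $\sum_i p_i = 1$, and $q_i \in \delta_{2i}$ with $\sum_i q_i = 1$; the definition of $T_1 \cap_{r_1} T_2$ requires $\delta_{1i} \cap \delta_{2i} \neq \emptyset$ for the intersection to be defined, and I would need the resulting set $\{\delta_{1i} \cap \delta_{2i}\}_{i \in I}$ to remain proper, so that some common distribution $p_i' \in \delta_{1i} \cap \delta_{2i}$ with $\sum_i p_i' = 1$ exists to satisfy the side condition of ({\sc TSelect}). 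This is precisely where the earlier properness/reachability machinery for sets of imprecise probabilities, together with the well-formedness assumption on global types, must be invoked; I expect this to require the most explicit reasoning, whereas the remaining cases are essentially bookkeeping that propagate the intersection through the derivation tree via the induction hypothesis.
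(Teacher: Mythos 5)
Your skeleton --- structural induction on $P$, inverting the syntax-directed typing rules in both derivations and reapplying them with intersected annotations --- is exactly the paper's approach (the paper's entire proof is the single sentence that the result follows by induction on the structure of $P$ using the typing rules), and your treatment of the passive cases is fine. The problem is in the one case you yourself single out as crucial, ({\sc TSelect}). You take \emph{two} witness families, $p_i\in\delta_{1i}$ and $q_i\in\delta_{2i}$, each summing to $1$, and then plan to manufacture a common distribution $p'_i\in\delta_{1i}\cap\delta_{2i}$ with $\sum_i p'_i=1$ out of the properness/reachability machinery and the well-formedness of global types. That inference is not valid: properness and reachability of the two families separately say nothing about their pointwise intersections. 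Take $\{\delta_{1i}\}_{i}=\{[0.6,0.7],[0.3,0.4]\}$ and $\{\delta_{2i}\}_{i}=\{[0.3,0.4],[0.6,0.7]\}$: both sets are proper and reachable (all the defining inequalities hold with equality), and the two annotated types have the same erasure, yet $\delta_{11}\cap\delta_{21}=\emptyset$, so the intersected type is not even defined, let alone inhabited by a distribution summing to $1$. Moreover, this theorem nowhere mentions a global type, so well-formedness of global types is not a hypothesis you can invoke.

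What rescues the case --- and what your proposal misses --- is that the probabilities are not chosen by the derivation; they are written in the process. Rule ({\sc TSelect}) types the process $c[r]\oplus_{i\in I}\langle p_i:l_i(v_i);P_i\rangle$, and its side conditions $\sum_{i\in I}p_i=1$ and $p_i\in\delta_i$ constrain those syntactic $p_i$. Since both hypotheses type the \emph{same} process $P$, both derivations are forced to use the same family $(p_i)_{i\in I}$; in particular your $q_i$ is equal to $p_i$. Hence $p_i\in\delta_{1i}\cap\delta_{2i}$ automatically: each intersection is nonempty (so $\Delta_1\cap_r\Delta_2$ is defined at this node), and $(p_i)_{i\in I}$ itself is the witnessing distribution needed to reapply ({\sc TSelect}) after the induction hypothesis is applied to the continuations $P_i$. (The example above corresponds to two types that can never type a common process, which is exactly why it does not contradict the theorem.) With this observation the ({\sc TSelect}) case becomes as routine as the others, and no appeal to properness, reachability, or global types is needed anywhere in the proof.
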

\begin{proof}
The proof follows by induction on the structure of process~$P$ and by using 
the rules of Table~\ref{table:typing}. 
\end{proof}

In what follows we associate to any execution path (starting from a 
process~$P$ and leading to a process $Q$) an execution probability of 
going along this path. This is computed by using the operational 
semantics presented in Table \ref{table:semantics}.
\vspace{0.5ex}

\begin{definition}[evolution paths]\label{definition:evolpath}
A sequence of evolution transitions represents an evolution path $ep=P_{0} 
\xrightarrow{tl_{1}}_{p_{1}} P_{1} \ldots \xrightarrow{tl_{k}}_{p_{k}} 
P_{k}$. Two evolution paths $ep=P_{0} \xrightarrow{tl_{1}}_{p_{1}} P_{1} 
\ldots \xrightarrow{tl_{k}}_{p_{k}} P_{k}$ and $ep'=P'_{0} 
\xrightarrow{tl'_{1}}_{p'_{1}} P'_{1} \ldots 
\xrightarrow{tl'_{k'}}_{p'_{k'}} P'_{k'}$ are identical if $k=k'$, and 
for all $t\in \{0,\ldots,k\}$ we have $P_i \equiv P'_i$, $p_i=p'_i$ 
and $tl_i=tl'_i$. 
\end{definition}
\vspace{0.5ex}

\begin{definition}[evolution probability]\label{definition:evolprob}
The probability to reach a process $P_j$ starting from a process $P_i$ 
(with $i<j$) along the evolution path $ep=P_{0} 
\xrightarrow{tl_{1}}_{p_{1}} P_{1} \ldots \xrightarrow{tl_{k}}_{p_{k}} 
P_{k}$ is denoted by $prob(ep,P_i,P_j)=p_{i+1} * \ldots * p_{j}$. The 
evolution probability to reach $P_j$ starting from a process~$P_i$ (with 
$i<j$) taking all the possible evolution paths (without considering 
identical paths) is denoted by $\prob(P_i,P_j)=\sum_{ep}prob(ep,P_i,P_j)$. 
\end{definition}

For any process $P$ we can define the sets $\Reach_k(P)$ of processes 
reached in $k$ steps starting from $P$ by using the rules of Table 
\ref{table:semantics}.

\begin{definition}[reachable sets]
$\Reach_1(P) =\{Q \mid P \xrightarrow{tl}_p Q\}$. \ For $k\geq 2$, 

\centerline{$\Reach_k(P) = \{Q \mid Q \in Reach_{k-1}(P) {\rm ~and ~} Q \not\rightarrow\} \cup \{R \mid  Q \in \Reach_{k-1}(P) {\rm ~and ~}  Q \xrightarrow{tl}_p R \}$.}

\end{definition}

\noindent 
The following result is based on the fact that for any process there exist 
various possible transitions, each occurring with a given probability. The 
approach is solid if for each well-typed process, the sum of these probabilities 
is~$1$.

\begin{theorem}\label{theorem:reach}
Given a well-typed process $P$ such that $\Reach_k(P) \neq \emptyset$ for 
$k\geq 1$, then

\centerline{${\sum_{Q\in \Reach_k(P)}prob(P,Q)=1}$ .}
\end{theorem}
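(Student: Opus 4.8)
The plan is to prove by induction on $k$ that the probabilities of all transitions from a well-typed process sum to $1$, where the key technical content lies in verifying the base case $k=1$; the inductive step then follows by partitioning $\Reach_k(P)$ according to the structure given in the definition of reachable sets. First I would establish the base case: I need to show that $\sum_{Q \in \Reach_1(P)} prob(P,Q) = 1$, i.e. that the one-step transition probabilities out of any well-typed $P$ sum to $1$. This requires analyzing the operational semantics rules of Table~\ref{table:semantics} together with the $\nextProc$ normalization. The intuition is that $\nextProc(P)$ counts the total number of applicable {\sc (Comm)} and {\sc (Call)} reductions, and the {\sc (Ctxt)} rule renormalizes each local probability $p$ to $p' = p \cdot \nextProc(P)/\nextProc(\E(P))$, so that the contributions of all enabled redexes are weighted to sum to~$1$.

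The heart of the base case splits into two observations. First, for a single {\sc (Comm)} redex $s[r_1][r_2]\oplus_{j\in J}\langle p_j:l_j(v_j);P_j\rangle \mid s[r_2][r_1]\&_{i\in I}\{l_i(x_i);P'_i\}$, well-typedness through rule {\sc (TSelect)} guarantees the side condition $\sum_{i\in I} p_i = 1$ with each $p_i \in \delta_i$; thus the probabilities of the different label-choices within one redex already sum to~$1$. Second, when $P$ contains several independent enabled redexes (composed in parallel or under contexts), I would argue that the {\sc (Ctxt)} normalization distributes the mass correctly: writing $P$ as a context around each redex, the sum of $p' = p \cdot \nextProc(\text{redex})/\nextProc(P)$ over all redexes equals $\sum_{\text{redex}} \nextProc(\text{redex})/\nextProc(P)$, and since $\nextProc(P)$ is defined recursively as exactly the sum of the $\nextProc$-values of its enabled sub-redexes (plus the $+1$ contributions for each individual {\sc (Comm)}/{\sc (Call)} site), this total telescopes to~$1$. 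This is where I would lean on Theorem~\ref{theorem:reduction} to know that each outgoing transition lands in a well-typed process, and on the definition of $\nextProc$ to perform the counting.

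For the inductive step, I would assume $\sum_{Q \in \Reach_k(P)} prob(P,Q) = 1$ and consider $\Reach_{k+1}(P)$. By the definition of reachable sets, $\Reach_{k+1}(P)$ decomposes into the stuck processes carried over from $\Reach_k(P)$ (which retain their accumulated probability) together with the one-step successors of the non-stuck processes in $\Reach_k(P)$. For each non-stuck $Q \in \Reach_k(P)$, Theorem~\ref{theorem:reduction} ensures $Q$ remains well-typed, so the base-case result applies to $Q$: its outgoing transition probabilities sum to~$1$. Multiplying through by $prob(P,Q)$ and summing, the total mass is preserved, using that $prob(P,R) = prob(P,Q)\cdot p$ for $Q \xrightarrow{tl}_p R$ by Definition~\ref{definition:evolprob}; the care needed here is to identify $prob(P,R)$ correctly when a process $R$ is reachable along several distinct evolution paths, which is handled by the ``without considering identical paths'' clause of Definition~\ref{definition:evolprob}.

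The main obstacle I anticipate is the base case, specifically the bookkeeping around $\nextProc$ and the {\sc (Ctxt)} normalization when multiple redexes coexist: one must verify that $\nextProc$ genuinely counts each enabled {\sc (Comm)}/{\sc (Call)} transition exactly once and that the context normalization $p' = p \cdot \nextProc(P)/\nextProc(\E(P))$ partitions unit mass across \emph{all} enabled redexes rather than double-counting or omitting some (for instance, across the two branches of a parallel composition, or across nested definitions). A subtle point is reconciling the within-redex sum $\sum_i p_i = 1$ for {\sc (Comm)} with the across-redex weighting, so that the grand total is $1$ and not the number of enabled redexes; getting the definition of $\nextProc$ to align precisely with the set of enabled transitions is the crux. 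Once the base case is nailed down, the induction is essentially an application of Theorem~\ref{theorem:reduction} together with routine probability arithmetic.
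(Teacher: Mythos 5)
Your proposal follows essentially the same route as the paper: induction on $k$, with the base case resting on the well-typedness side condition $\sum_{i\in I} p_i = 1$ from rule ({\sc TSelect}) together with the $\nextProc$ normalization of rule ({\sc Ctxt}), and the inductive step decomposing $\Reach_k(P)$ into the paths of the first $k-1$ steps followed by one additional step, whose conditional probabilities sum to $1$. You are in fact somewhat more explicit than the paper on two points it glosses over --- the across-redex accounting via $\nextProc$ when several redexes coexist (the paper's ``remaining cases are proved in a similar manner'') and the handling of stuck processes plus the appeal to Theorem~\ref{theorem:reduction} to keep intermediate processes well-typed in the inductive step --- but the underlying argument is the same.
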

\begin{proof}
By induction on the number $k$ of the evolution steps.
\begin{itemize}
\item Case $k=1$.

By induction on the structure of $P$, namely on the number of parallel components.
\begin{itemize}
\item Case $P=P_1$. We have several subcases:
\begin{itemize}

\item $P=\End$ or $P= c[r] \oplus_{i\in I}\langle 
p_i:l_i(v_i); P_i\rangle$ or $P= c[r]\&_{i\in I}\{l_i(x_i); 
P_i\}$ or $P= (\nu s)P_1$, meaning that no rule is applicable. Hence 
$\Reach_1(P)=\emptyset$, and the sum is not computed.

\item $P= {\sf def}~ X(\tilde{x}) = P' ~{\sf in}~ (X(\tilde{v}) | Q)$ meaning 
that the rule ({\sc Call}) is applied, and~$P$ evolves with probability $1$ to ${\sf def}~ X(\tilde{x}) = P' ~{\sf in}~ (P'\{\tilde{v}/\tilde{v}\} | Q)$. This means that 
${\sum_{Q\in \Reach_1(P)}\prob(P,Q)=1}$ (as desired).
\end{itemize}
\item Case $P=P_1 \mid P_2$. We have several subcases:
\begin{itemize}
\item $P_1=s[r_1][r_2] \oplus_{j\in J}\langle p_j:l_j(v_j); P_j\rangle$ and 
$P_2=s[r_2][r_1]\&_{i\in I}\{l_i(x_i); P'_i\}$ meaning that the rule ({\sc 
Com}) is applied, and $P$ evolves with probability $p_k$ (with $k\in J$) 
to $P_k \mid P'_k\{\tilde{v_k}/\tilde{x_k}\}$. Since the process $P$ is 
well-typed, it follows that ${\sum_{k\in J}p_k=1}$. Thus, ${\sum_{Q\in 
\Reach_1(P)}\prob(P,Q)=\sum_{k\in J}p_k=1}$ (as desired).
\end{itemize}
\end{itemize}
The remaining cases are proved in a similar manner.

\item Case $k>1$. 
Since $\Reach_k(P)\neq\emptyset$, this means that 
$\Reach_{k-1}(P)\neq\emptyset$. By considering the inductive step, we 
obtain that ${\sum_{Q\in \Reach_{k-1}(P)}\prob(P,Q)=1}$. Let us consider 
that another evolution step is performed, namely $k$ steps starting from 
process $P$. This means that ${\sum_{Q\in \Reach_k(P)}\prob(P,Q)}$ can be 
broken into two parts: one computing the probability for the first ($k-1$) 
steps, and another one using the probabilities obtained in the additional 
step. Thus it holds that ${\sum_{Q\in \Reach_k(P)}\prob(P,Q)}$
  $=\sum_{Q_1\in 
\Reach_{k-1}(P)}(\prob(P,Q_1) *$\linebreak 
$\sum_{Q\in \Reach_1(Q_1)}prob(Q_1,Q))$. 
Since we have $ \sum_{Q\in \Reach_1(Q_1)}prob(Q_1,Q)=1$, then \linebreak ${\sum_{Q\in \Reach_k(P)}\prob(P,Q)}$ 
  ${=\sum_{Q_1\in 
\Reach_{k-1}P)}\prob(P,Q_1) }$. Also, since we have 
\linebreak ${\sum_{Q_1\in \Reach_{k-1}(P)}\prob(P,Q_1)=1}$, \ 
then ${\sum_{Q\in \Reach_k(P)}\prob(P,Q)}=1$, as required.
\end{itemize}
\end{proof}

\section{Conclusion and Related Work}\label{section:conclusion} 

Aiming to represent and quantify uncertainty, imprecise probability 
generalizes probability theory to allow for partial probability 
specifications applicable when a unique probability distribution is 
difficult to be identified. In this paper we have used imprecise 
probability by introducing imprecise probabilities for multiparty session 
types in process algebra. According to our knowledge, there is no work in 
computer science dealing (explicitly) with imprecise probabilities.

We use a probabilistic extension of the process calculus presented 
in~\cite{Scalas17} by allowing both nondeterministic external choices and 
probabilistic internal choices. We define probabilistic multiparty session 
types able to codify the structure of the communications by using imprecise 
probabilities given in terms of lower and upper probabilities. Moreover, we 
have defined and studied a typing system extending the multiparty session 
types with imprecise probabilities; this typing system in which the 
channels have specific roles in a session is inspired by the system 
presented in \cite{Coppo16}. The new typing system has several properties 
and features. It preserves the classical typing properties; additionally, 
it is specified in such a way to satisfy the axioms of the standard 
probability theory and some properties of the imprecise probability theory. 
To illustrate our calculus, we considered an example inspired by 
probabilistic survey polls in which the probabilities are provided by a 
well-defined numerical scale for responses.

In \cite{Varacca07} there are proposed two semantics of a probabilistic 
variant of the $\pi$-calculus. For these, the types are used to identify a 
class of nondeterministic probabilistic behaviours which can preserve the 
compositionality of the parallel operator in the framework of event 
structures. The authors claim to perform an initial step towards a good 
typing discipline for probabilistic name passing by employing Segala 
automata \cite{Segala95} and probabilistic event structures. In comparison 
with them, we simplify the approach and work directly with processes, 
giving a probabilistic typing in the context of multiparty session types. 
Moreover, we include the imprecise probabilities in the framework of 
multiparty session types.

In a different framework presented in \cite{Llerena18}, the authors study  
the long-run behaviour of probabilistic models in the presence of 
uncertainty given by lower and upper bounds.

Several formal tools have been proposed for probabilistic reasoning. Some 
of these approaches make use of certain probabilistic logics. In 
\cite{Cooper14}, terms are assigned probabilistically to types via 
probabilistic type judgements, and from an intuitionistic typing system is 
derived a probabilistic logic as a subsytem \cite{Warrell16}. However, the 
existing probabilistic tools do not account for the imprecision of the 
probabilistic parameters in the model.


\bibliographystyle{unsrt}
\bibliography{referencesImprecise} 
 

\end{document}